\newcommand{\leqnomode}{\tagsleft@true\let\veqno\@@leqno}
\newcommand{\reqnomode}{\tagsleft@false\let\veqno\@@eqno}
\newcounter{algsubstate}
\theoremstyle{plain}
\newtheorem{theorem}{Theorem}
\newtheorem{lemma}[theorem]{Lemma}
\newtheorem{corollary}[theorem]{Corollary}
\theoremstyle{definition}
\newtheorem{remark}[theorem]{Remark}
\newcommand{\cB}{\mathcal{B}}
\newcommand{\cH}{\mathcal{H}}
\newcommand{\cK}{\mathcal{K}}
\def\oa{\overline{a}}
\def\ob{\overline{b}}
\def\oc{\overline{c}}
\def\bx{\mathbf{x}}
\DeclareMathOperator*{\poly}{poly}
\DeclareMathOperator{\minl}{min'l\;}
\def\final{0}  
\newcommand{\krnote}[1]{{\color{red}[{\tiny Krist\'of: \bf #1}]\marginpar{\color{red}*}}}
\newcommand{\onote}[1]{{\color{OliveGreen}[{\tiny \textbf{Ondrej:} \bf #1}]\marginpar{\color{OliveGreen}*}}}
\newcommand{\pnote}[1]{{\color{Mahogany}[{\tiny Petr: \bf #1}]\marginpar{\color{Mahogany}*}}}
\newcommand{\enote}[1]{{\color{NavyBlue}[{\tiny Endre: \bf #1}]\marginpar{\color{NavyBlue}*}}}
\newcommand{\knote}[1]{{\color{Orange}[{\tiny Kaz: \bf #1}]\marginpar{\color{Orange}*}}}
\newcommand{\krnote}[1]{}
\newcommand{\onote}[1]{}
\newcommand{\pnote}[1]{}
\newcommand{\enote}[1]{}
\newcommand{\knote}[1]{}
\title{Unique key Horn functions}
\author{Krist{\'o}f B{\'e}rczi\thanks{MTA-ELTE Egerv\'ary Research Group, Department of Operations Research, E{\"o}tv{\"o}s Lor{\'a}nd University, Budapest, Hungary. Email: \texttt{berkri@cs.elte.hu}.} 
\and
Endre Boros\thanks{MSIS Department and RUTCOR, Rutgers University, New Jersey, USA. Email: \texttt{endre.boros@rutgers.edu}.}
\and
Ond\v{r}ej \v{C}epek\thanks{Charles University, Faculty of Mathematics and Physics, Department of Theoretical Computer Science and Mathematical Logic, Praha, Czech Republic. Email: \texttt{cepek@ktiml.mff.cuni.cz}.}
\and
Petr Ku\v{c}era\thanks{Charles University, Faculty of Mathematics and Physics, Department of Theoretical Computer Science and Mathematical Logic, Praha, Czech Republic. Email: \texttt{kucerap@ktiml.mff.cuni.cz}.}
\and
Kazuhisa Makino\thanks{Research Institute for Mathematical Sciences (RIMS) Kyoto University, Kyoto, Japan. Email: \texttt{makino@kurims.kyoto.ac.jp}.}
}
\begin{document}
\date{}
\maketitle

\begin{abstract}
Given a relational database, a key is a set of attributes such that a value assignment to this set uniquely determines the values of all other attributes. The database uniquely defines a pure Horn function $h$, representing the functional dependencies. If the knowledge of the attribute values in set $A$ determines the value for attribute $v$, then $A\rightarrow v$ is an implicate of $h$. If $K$ is a key of the database, then $K\rightarrow v$ is an implicate of $h$ for all attributes $v$. 

Keys of small sizes play a crucial role in various problems. We present structural and complexity results on the set of minimal keys of pure Horn functions. We characterize Sperner hypergraphs for which there is a unique pure Horn function with the given hypergraph as the set of minimal keys. Furthermore, we show that recognizing such hypergraphs is co-$\mathsf{NP}$-complete already when every hyperedge has size two. On the positive side, we identify several classes of graphs for which the recognition problem can be decided in polynomial time.

We also present an algorithm that generates the minimal keys of a pure Horn function with polynomial delay. By establishing a connection between keys and target sets, our approach can be used to generate all minimal target sets with polynomial delay when the thresholds are bounded by a constant. As a byproduct, our proof shows that the {\sc Minimum Key} problem is at least as hard as the {\sc Minimum Target Set Selection} problem with bounded thresholds. 
  
  \bigskip

  \noindent \textbf{Keywords:} Generation, Key Horn function, Minimal key, Pure Horn function, Sperner hypergraph, Unique key Horn function, Target set selection
\end{abstract}

\section{Introduction}
\label{sec:intro}
Throughout the paper, we denote by $V$ the set of $n$ Boolean variables. We will refer to the members of $V$ as \emph{positive} and to their negations as \emph{negative literals}, respectively. A \emph{Boolean function} is a mapping $f:\{0,1\}^V\to\{0,1\}$. A \emph{conjunctive normal form} (CNF) is the conjunction of clauses, where each clause is a disjunction of literals. The CNF $\Phi=C_1\wedge\dots\wedge C_q$ is also viewed as a set of clauses $\Phi=\{C_1,\dots,C_q\}$. 

A CNF $\Phi$ is called \emph{Horn} if each of its clauses contains at most one positive literal, and \emph{pure Horn} if every clause contains exactly one positive literal. A Boolean function $h$ is called \emph{(pure) Horn} if it has a (pure) Horn CNF representation. Note that every CNF defines a Boolean function, but a Boolean function may have many different CNF representations. For instance, given the pure Horn CNF $\Phi=(\oa \vee b) \wedge (\ob \vee a) \wedge (\oa \vee \oc \vee d) \wedge (\oa \vee \oc \vee e)$ on variables $a,b,c,d,e$, we can also represent the same Boolean function $h$ by the pure Horn CNF $\Psi=(\oa \vee b) \wedge (\ob \vee a) \wedge (\ob \vee \oc \vee d) \wedge (\ob \vee \oc \vee e)$. Note that a pure Horn clause can also be viewed as an implication. For instance, $C=\ob \vee \oc \vee e$ is equivalent with the implication $bc\rightarrow e$. Thus, we can view a pure Horn CNF as an implication system, e.g., we shall write $\Phi$ equivalently, as $a \rightarrow b,  b \rightarrow a,  ac \rightarrow de$. 
For an implication $A\rightarrow v$ we call $A$ the \emph{body} and $v$ the \emph{head}. We say that $A\rightarrow v$ is an \emph{implicate} of the Horn function $h$ if any assignment $x\in \{0,1\}^V$ that falsifies $A\rightarrow v$ also falsifies $h$. In particular, if $h$ is represented by a pure Horn CNF then the clauses of this CNF are all implicates of $h$. 

The concept of Horn functions has been widely studied under different names, such as directed hypergraphs in graph theory and combinatorics \cite{ADS86}, as implication systems in machine learning \cite{AB09,AB11}, database theory \cite{A74,maier}, and as lattices and closure systems in algebra and concept lattice analysis \cite{CASPARD2003241,GD86}. Horn functions form a fundamental subclass of Boolean functions endowed with interesting structural and computational properties. 
The satisfiability problem can be solved for Horn functions in linear time and the equivalence of such formulas can be decided in polynomial time \cite{DOWLING1984267}. Horn functions are strongly related to relational databases \cite{A74} and many interesting algorithmic problems arise from that context. Given a database, we associate the set $V$ of Boolean variables to the set of attributes of the database. For $A\subseteq V$ and $v\in V$ we write $A\rightarrow v$ if the knowledge of the attribute values in $A$ uniquely determines the value of $v$ (in the database records). Such a relation is called a functional dependency in the database. The set of all functional dependencies define a unique pure Horn function associated to the given database \cite{A74,maier}.  
One of the important notions that arise from databases is the notion of a key. A key in a relational database is a set of attributes the values of which determine uniquely the values of all other attributes. 
Accordingly, a subset $K$ of the variables is a \emph{key} of a Horn function $h$ if $K\rightarrow v$ is an implicate of $h$ for all $v\in V\setminus K$. 

We call a pure Horn function \emph{key Horn} if the body of any of its implicates is a key of the function. 
Key Horn functions generalize the well studied class of \emph{hydra functions}  introduced in \cite{Sloan2017HydrasDH}, where all the bodies are of size $2$. Finding the shortest CNF representation of a given Horn function with respect to multiple relevant measures (number of clauses, number of literals, etc.) is an outstanding hard problem \cite{ADS86,maier,HK93}. For general pure Horn functions not even non-trivial approximation algorithms are known. For hydra functions a 2-approximation algorithm was given in \cite{Sloan2017HydrasDH}, while \cite{Kucera2014HydrasCO} proved that the minimization remains NP-hard even in this special case. In \cite{arxiv}, the authors provided logarithmic factor approximation algorithms for general key Horn functions with respect to all of the above mentioned measures.

\paragraph{Our results}

The present paper focuses on the structure of the set of minimal keys of a pure Horn function. In particular, we are interested in finding Sperner hypergraphs $\cB$ that form the set of minimal keys of a unique pure Horn function $h_\cB$. We call such a $\cB$ a \emph{unique key hypergraph}, and the corresponding Horn function $h_\cB$ a \emph{unique key Horn function}. 

Section~\ref{sec:ukhf} gives a characterization of unique key hypergraphs and unique key Horn functions. In particular, we show that cuts of a matroid form a unique key hypergraph. The special case when every hyperedge has size two is discussed in Section~\ref{sec:ukg}, where we show that recognizing unique key graphs is co-$\mathsf{NP}$-complete. Subsequently, we identify several classes of graphs for which the recognition problem can be decided in polynomial time. Section~\ref{sec:lex} provides an algorithm that generates all minimal keys of a pure Horn function with polynomial delay. Furthermore, we show that the problems of finding a minimum key of a pure Horn function and of finding a minimum target set of a graph are closely related. Using this connection, our algorithm can be used to generate all minimal target sets with polynomial delay when the thresholds are bounded by a constant.

\section{Unique Key Horn Functions}
\label{sec:ukhf}

The purpose of this section is to give an understanding of the structure of pure Horn functions that have the same set of keys and in particular the structure of unique key Horn functions.

We start with additional definitions and notation. We view the set of variables $V$ as a ground set. A hypergraph $\cB\subseteq 2^V$ is called a \emph{Sperner hypergraph} if none of its hyperedges contains another one. Given a Sperner hypergraph $\cB\subseteq 2^V$, we say that $T\subseteq V$ is a \emph{transversal} of $\cB$, if $T\cap B\neq \emptyset$ for all $B\in \cB$. We say that $S$ is an \emph{independent set} of $\cB$ if $T=V\setminus S$ is a transversal of $\cB$. We denote by $\cB^d$ the set of \emph{minimal transversversals} of $\cB$, and by $\cB^*$ the \emph{family of its independent sets}. 

For a hypergraph $\cB\subseteq 2^V$ and subset $S\subseteq V$ we denote by 
$\cB_S = \{B\in \cB\mid B\subseteq S\}$
the \emph{subhypergraph of $\cB$ induced by $S$}. In particular, if $S\in \cB^*$ then $\cB_S=\emptyset$. 
Furthermore, we denote by $\cB^S = \minl \{S\cap B\mid B\in\cB\}$
the \emph{projection of $\cB$ to $S$} where $\minl\{\cH\}$ denotes the family consisting of the inclusionwise minimal members of $\cH$. Clearly, if $S$ is not a transversal of $\cB$ then we have $\cB^S =\{\emptyset\}$. We introduce the notation $\cup \cB$ to denote the union of the hyperedges of $\cB$, i.e. $\cup \cB = \bigcup_{B\in\cB} B$.
We will use the following well-known lemma.

\begin{lemma}[Seymour \cite{seymour1973incomparable}]\label{l1}
For a Sperner hypergraph $\cB\subseteq 2^V$ and subset $S\subseteq V$ we have 
$(\cB_S)^d=(\cB^d)^S$ and $(\cB^S)^d=(\cB^d)_S.$
\end{lemma}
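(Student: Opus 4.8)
The plan is to establish both identities directly, by a symmetric pair of double-inclusion arguments; I would carry out the first in full and obtain the second by the same reasoning with the roles of ``deletion'' and ``projection'' interchanged (alternatively, the second identity follows from the first together with the blocker duality $(\cB^d)^d=\cB$, but I prefer to keep the argument self-contained). It is convenient to write $Z=V\setminus S$, so that $\cB_S$ is the family of hyperedges of $\cB$ avoiding $Z$ (a \emph{deletion}) while $\cB^S$ is obtained by removing $Z$ from every hyperedge and keeping the inclusionwise minimal results (a \emph{contraction}); in this language the two identities are exactly the classical statements that blocking commutes with deletion and with contraction. A useful first observation is that every member of $(\cB_S)^d$ and of $(\cB^S)^d$ is a subset of $S$, since a minimal transversal of a hypergraph all of whose edges lie in $S$ can only use elements of those edges and hence avoids $Z$; and members of $(\cB^d)^S$ and of $(\cB^d)_S$ are subsets of $S$ by definition. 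Thus all four families live in $2^S$ and it suffices to compare their members.

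For the first identity $(\cB_S)^d=(\cB^d)^S$ I would argue two inclusions. For ``$\supseteq$'', take $Y\in(\cB^d)^S$, say $Y=S\cap T$ with $T\in\cB^d$ and $Y$ minimal among all sets $S\cap T'$ with $T'\in\cB^d$. Since $B\subseteq S$ forces $T\cap B=Y\cap B$, the set $Y$ is a transversal of $\cB_S$; to see it is minimal I would take any transversal $Y_0\subsetneq Y$ of $\cB_S$ and use the key \emph{lifting trick}: $Y_0\cup Z$ is a transversal of all of $\cB$ (edges inside $S$ are hit by $Y_0$, edges leaving $S$ are hit by $Z$), hence contains some $T_0\in\cB^d$, and then $T_0\cap S\subseteq Y_0\subsetneq Y$ contradicts the minimality of $Y$. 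For ``$\subseteq$'', take a minimal transversal $X$ of $\cB_S$; the same lifting shows $X\cup Z$ is a transversal of $\cB$, so it contains some $T\in\cB^d$ with $T\cap S\subseteq X$, and a short computation shows $T\cap S$ is again a transversal of $\cB_S$, whence minimality of $X$ gives $T\cap S=X$; a final minimality check places $X$ in $(\cB^d)^S$.

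For the second identity $(\cB^S)^d=(\cB^d)_S$ I would run the mirror-image argument, now exploiting that every set $S\cap B$ contains some member $B'\in\cB^S$ with $B'\subseteq B$. Concretely, for ``$\supseteq$'' a minimal transversal $T\subseteq S$ of $\cB$ hits every $B'\in\cB^S$ (because $B'\subseteq B$ and $T$ hits $B$), and any proper subset of $T$ that were a transversal of $\cB^S$ would, via the same inclusion $B'\subseteq B$, be a transversal of $\cB$, contradicting minimality of $T$; hence $T\in(\cB^S)^d$. For ``$\subseteq$'' a minimal transversal $X$ of $\cB^S$ is automatically contained in $\cup\cB^S\subseteq S$, is a transversal of $\cB$ by the same edge-containment, and is minimal there because any strictly smaller transversal of $\cB$, lying in $S$, would restrict to a transversal of $\cB^S$.

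The routine part is the bookkeeping of translating ``$T$ hits $B$'' into ``$T$ hits $S\cap B$'' and back, which is immediate whenever the set in question lies in $S$. The real content — and the step I expect to be the only genuine obstacle — is each \emph{minimality} verification, where from a transversal of the reduced hypergraph one must manufacture a transversal of the ambient hypergraph $\cB$. This is precisely where the lifting trick $Y_0\mapsto Y_0\cup Z$ (for the deletion identity) and the edge-containment $B'\subseteq B$ (for the contraction identity) do the work, and keeping the inclusions strict is the delicate point to get right.
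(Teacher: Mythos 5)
Your proof is correct. The paper itself offers no proof of this lemma --- it is stated as a known result with a citation to Seymour --- so there is no ``paper approach'' to compare against; your double-inclusion argument, with the lifting trick $Y_0\mapsto Y_0\cup(V\setminus S)$ for the deletion identity and the edge-containment $B'\subseteq S\cap B$ for the contraction identity, is the standard self-contained derivation and every step checks out, including the final minimality verifications. The only thing left implicit is the degenerate cases ($\cB_S=\emptyset$, respectively $S$ not a transversal of $\cB$, where the paper's conventions give $(\cB_S)^d=\{\emptyset\}$ and $\cB^S=\{\emptyset\}$); these are consistent with both identities and your argument degenerates gracefully there, but a one-line remark would make that explicit.
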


For a Boolean function $h$, we denote by $T(h)$ the set of true vectors of $h$, i.e., $T(h)=\{\bx\in\{0,1\}^V\mid h(\bx)=1\}$. For two functions $h$ and $h'$ we write $h\leq h'$ if for all $\bx\in\{0,1\}^V$ we have $h(\bx)\leq h'(\bx)$, in other words, if $T(h)\subseteq T(h')$. 
We say that a clause $A\to v=v\vee\bigvee_{a\in A} \bar{a}$ is an \emph{implicate} of $h$ if $(A\to v)(\bx)\geq h(\bx)$ for all $\bx \in \{0,1\}^V$. 
For a subset $S\subseteq V$ we define the \emph{forward chaining closure} of $S$ by 
$F_h(S) = \{u\in V\mid S\to u \text{ is an implicate of } h\}$. Note that if $h'\leq h$, then $F_{h'}(S)\supseteq F_h(S)$, since any implicate of $h$ is also an implicate of $h'$. For a CNF $\Phi$ we use the same terminology and notation as it defines a unique Boolean function. For example, $\Phi\subseteq\Psi$ implies $\Phi\geq\Psi$.

Keys of a pure Horn function clearly form an upward monotone system. We denote by $\cK(h)$ the \emph{set of minimal keys} of $h$. 
To a Sperner hypergraph $\cB\subseteq 2^V$ we associate the pure Horn CNF
\[
\Phi_\cB ~=~ \bigwedge_{B\in\cB} \bigwedge_{v\in V\setminus B} (B\to v).
\]
Note that we have $\cK(\Phi_\cB)=\cB$. For a Sperner family $\cB$ we call $\Phi_\cB$ a \emph{key Horn CNF}. Consequently, a pure Horn function is key Horn if and only if it has such a CNF representation. 
Let us observe that for a Sperner hypergraph $\cB$ and pure Horn function $h$, $\cB\subseteq \cK(h)$ implies that $h\leq\Phi_\cB$. 

Let us also note that there may be several pure Horn functions with the same family of keys.  As an example, consider the hypergraph $\cB=\{\{a,b\},\{b,c\},\{c,d\}\}$ over the ground set $V=\{a,b,c,d\}$, and the pure Horn CNFs $\Psi^1=\Phi_\cB\wedge (b\to d)$, $\Psi^2=\Phi_\cB\wedge (c\to a)$, and $\Psi^3=\Phi_\cB\wedge (b\to d)\wedge(c\to a)$.
It is easy to verify now that the CNFs $\Phi_\cB$, and $\Psi^i$, $i=1,2,3$ define four pairwise distinct pure Horn functions and each has $\cB$ as its set of minimal keys.

\begin{lemma}\label{t1}
Let $\cB\subseteq 2^V$ be a Sperner hypergraph and $h:\{0,1\}^V\to \{0,1\}$ be a pure Horn function such that $h\leq \Phi_\cB$. Then $\cK(h)\neq \cB$ if and only if there exists an implicate $A\to v$ of $h$ and a minimal transversal $T\in \cB^d$ such that $T\cap A=\emptyset$ and $v\in T$.
\end{lemma}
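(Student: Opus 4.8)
The plan is to recast the statement in terms of keys that avoid every hyperedge of $\cB$, and then match this with the implicate/transversal condition. First I would record the consequence of the hypothesis $h\le\Phi_\cB$: every implicate of $\Phi_\cB$ is an implicate of $h$, so each $B\in\cB$ is a key of $h$, i.e.\ $F_h(B)=V$. Writing $\cB^\uparrow$ for the family of all sets containing some hyperedge of $\cB$, the keys of $h$ therefore include $\cB^\uparrow$, and since $\cB$ is Sperner one checks that $\cK(h)=\cB$ holds exactly when the family of \emph{all} keys of $h$ equals $\cB^\uparrow$ (if they are equal, the minimal keys are the minimal elements of $\cB^\uparrow$, which are exactly $\cB$; the converse is immediate). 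Hence $\cK(h)\neq\cB$ if and only if $h$ has a key $K$ containing no hyperedge of $\cB$, equivalently a key $K$ with $V\setminus K$ a transversal of $\cB$, i.e.\ an \emph{independent} key. Proving the lemma then reduces to matching the existence of such a key with the stated condition.

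For the forward implication, suppose $K$ is such a key, so $V\setminus K$ is a transversal and thus contains some minimal transversal $T\in\cB^d$. Since the hyperedges are nonempty, $T\neq\emptyset$; I would pick any $v\in T$ and set $A:=K$. Then $T\cap A=\emptyset$ and $v\in T$, while $v\notin K$ and $K$ is a key, so $K\to v$ is an implicate of $h$. This is precisely the required pair $A\to v$ and $T\in\cB^d$.

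For the converse, suppose $A\to v$ is an implicate of $h$ and $T\in\cB^d$ satisfies $T\cap A=\emptyset$ and $v\in T$. I would set $S:=V\setminus T$, the maximal independent set complementary to the minimal transversal $T$; it contains no hyperedge of $\cB$. From $T\cap A=\emptyset$ we get $A\subseteq S$, so $v\in F_h(A)\subseteq F_h(S)$ and hence $S\cup\{v\}\subseteq F_h(S)$. Minimality of $T$ says $T\setminus\{v\}$ is not a transversal, so some $B\in\cB$ is disjoint from $T\setminus\{v\}$; as $B$ does meet $T$, necessarily $v\in B$ and $B\setminus\{v\}\subseteq V\setminus T=S$, whence $B\subseteq S\cup\{v\}\subseteq F_h(S)$. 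Since $B$ is itself a key, $V=F_h(B)\subseteq F_h(F_h(S))=F_h(S)$, so $S$ is a key of $h$; being independent, it witnesses $\cK(h)\neq\cB$ via the reformulation.

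The delicate point is this last implication, and specifically the bootstrapping step: the independent set $S=V\setminus T$ is typically \emph{not} a key on its own, but minimality of $T$ forces an entire hyperedge $B$ into the closure $F_h(S)$, and because every such $B$ is already a key (this is the only place the hypothesis $h\le\Phi_\cB$ enters) the closure must exhaust $V$. In a full write-up I would also make explicit the standard forward-chaining facts used throughout—extensivity and monotonicity of $F_h$, idempotency $F_h(F_h(S))=F_h(S)$, and the fact that $A\to v$ being an implicate with $A\subseteq S$ yields $v\in F_h(S)$—together with the Sperner argument underlying the reformulation of the first paragraph.
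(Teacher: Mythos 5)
Your proposal is correct and follows essentially the same route as the paper's proof: the same reduction to the existence of an independent key (the paper phrases this as a minimal key in $\cB^*$), the same choice $A=K$ with $v$ in a disjoint minimal transversal for the forward direction, and the same bootstrapping for the converse, where minimality of $T$ yields a hyperedge $B$ with $B\cap T=\{v\}$ so that $F_h(V\setminus T)=V$. The only cosmetic difference is that you unwind the closure computation explicitly via monotonicity and idempotency of $F_h$, whereas the paper packages it as $F_{(A\to v)\wedge\Phi_\cB}(V\setminus T)=V$ together with $h\leq (A\to v)\wedge\Phi_\cB$.
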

\begin{proof}
Since $h\leq \Phi_\cB$, any $B\in\cB$ is a key of $h$. Thus $\cK(h)\subseteq \cB$ implies $\cK(h)=\cB$ because $\cB$ is a Sperner hypergraph. 

Assume first that $\cK(h)\neq \cB$, that is, there exists a minimal key $K\in \cK(h)\setminus \cB$. 
Since the sets of $\cB$ are keys of $h$ and $K$ is a minimal key of $h$, we must have $K\in\cB^*$. Let \(W\) denote a
maximal independent set which contains \(K\) as a subset. It follows
that \(T=V\setminus W\) is a minimal transversal which is disjoint
from \(K\). Let \(v\) be an
arbitrary node in \(T\). Then
\(K\to v\) is an implicate of \(h\) because \(K\)
is a key.  Thus, choosing $A=K$ proves one direction of our claim. 

For the reverse direction, let us assume that there exists an implicate $A\to v$ of $h$ and a minimal transversal $T\in \cB^d$ such that  $T\cap A=\emptyset$ and $v\in T$. Since $T$ is a minimal transversal of $\cB$, there exists $B\in \cB$ such that $T\cap B=\{v\}$. This implies that $F_{(A\to v)\wedge\Phi_\cB}(V\setminus T)=V$. Because we have $h\leq (A\to v)\wedge\Phi_\cB$ by our assumptions, $F_h(V\setminus T)=V$ follows. Therefore there exists a minimal key $K\subseteq V\setminus T$ of $h$. Finally, $V\setminus T\in\cB^*$ implies $K\in\cB^*$, from which $K\in \cK(h)\setminus \cB$ follows as claimed.
\end{proof}

\begin{lemma}\label{c1}
Let $\cB\subseteq 2^V$ be a Sperner hypergraph and $h:\{0,1\}^V\to \{0,1\}$ be a pure Horn function such that $h\leq \Phi_\cB$. Then $\cK(h)=\cB$ if and only if for all implicates $A\to v$ of $h$ with $A\in\cB^*$ we have $v\in (V\setminus A)\setminus (\cup \cB^{V\setminus A})$. 
\end{lemma}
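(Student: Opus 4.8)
The plan is to derive Lemma~\ref{c1} from Lemma~\ref{t1} by rewriting the existential condition appearing there in terms of the projection $\cB^{V\setminus A}$. Recall that Lemma~\ref{t1} states $\cK(h)\neq\cB$ iff there is an implicate $A\to v$ of $h$ and a minimal transversal $T\in\cB^d$ with $T\cap A=\emptyset$ and $v\in T$; so it suffices to show that the negation of this existential statement is exactly the condition in Lemma~\ref{c1}. Throughout I would consider only non-trivial implicates, i.e.\ $A\to v$ with $v\notin A$, since an implicate with $v\in A$ is a tautology and can satisfy neither ``$v\in T$, $T\cap A=\emptyset$'' nor ``$v\in(V\setminus A)\setminus(\cup\cB^{V\setminus A})$''.

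First I would observe that the body $A$ of any witnessing implicate in Lemma~\ref{t1} automatically lies in $\cB^*$: if $T\in\cB^d$ is disjoint from $A$, then $A\subseteq V\setminus T$, and since $T$ is a transversal, $V\setminus T$ is independent; as $\cB^*$ is downward closed (a subset of an independent set is independent, because enlarging the complement preserves being a transversal), we get $A\in\cB^*$. Consequently the existential condition of Lemma~\ref{t1} may be restricted, without loss, to implicates whose body is an independent set, matching the range of the quantifier in Lemma~\ref{c1}.

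The heart of the argument is the identity, valid for every $A\in\cB^*$,
\[
\cup\cB^{V\setminus A} ~=~ \bigcup\{T\in\cB^d\mid T\cap A=\emptyset\}.
\]
To prove it, set $S=V\setminus A$. Since $A\in\cB^*$, $S$ is a transversal, so no hyperedge is contained in $A$; hence $\cB^S$ is a nonempty Sperner hypergraph without empty edges. Seymour's Lemma~\ref{l1} gives $(\cB^S)^d=(\cB^d)_S=\{T\in\cB^d\mid T\subseteq S\}=\{T\in\cB^d\mid T\cap A=\emptyset\}$, so the right-hand side above equals $\cup(\cB^S)^d$. It then remains to use the folklore fact that for a nonempty Sperner hypergraph $\cH$ without empty edges one has $\cup\cH=\cup\cH^d$: the inclusion $\cup\cH^d\subseteq\cup\cH$ holds because every element $t$ of a minimal transversal $T$ is, by minimality, the unique element of $T$ in some hyperedge; the reverse inclusion follows by taking $u\in B_0\in\cH$, choosing for each $B\neq B_0$ a point $x_B\in B\setminus B_0$ (possible by the Sperner property), and shrinking $\{u\}\cup\{x_B\mid B\neq B_0\}$—which meets $B_0$ only in $u$—to a minimal transversal, which must therefore contain $u$. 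Applying this with $\cH=\cB^S$ yields $\cup(\cB^S)^d=\cup\cB^S=\cup\cB^{V\setminus A}$, establishing the identity.

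With the identity in hand the proof closes quickly. For an implicate $A\to v$ with $A\in\cB^*$ (so $v\in V\setminus A$), the condition $v\in(V\setminus A)\setminus(\cup\cB^{V\setminus A})$ is equivalent to $v\notin\bigcup\{T\in\cB^d\mid T\cap A=\emptyset\}$, i.e.\ to the nonexistence of a $T\in\cB^d$ with $T\cap A=\emptyset$ and $v\in T$. Quantifying over all such implicates, the displayed condition of Lemma~\ref{c1} is precisely the negation of the restricted existential statement, which by the first two paragraphs is equivalent to the existential statement of Lemma~\ref{t1}. Hence the condition of Lemma~\ref{c1} holds iff $\cK(h)=\cB$. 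I expect the only genuine obstacle to be the union identity, and within it the reverse inclusion $\cup\cH\subseteq\cup\cH^d$, which is where the Sperner hypothesis is essential; everything else is bookkeeping on top of Lemma~\ref{t1} and Seymour's lemma.
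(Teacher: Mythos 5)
Your proof is correct and follows essentially the same route as the paper's: reduce to Lemma~\ref{t1}, observe that a body admitting a disjoint minimal transversal is automatically in $\cB^*$, and convert $\bigcup\{T\in\cB^d\mid T\cap A=\emptyset\}$ into $\cup\cB^{V\setminus A}$ via Seymour's Lemma~\ref{l1} together with the identity $\cup\cH=\cup\cH^d$. The only difference is that you additionally supply a proof of that folklore identity and explicitly dispose of tautological implicates, both of which the paper leaves implicit.
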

\begin{proof}
Let us first note that for any subset $A\subseteq V$ that has a disjoint minimal transversal $T\in\cB^d$ we must have $A\in\cB^*$. 
Thus, by Lemma \ref{t1}, we have $\cK(h)=\cB$ if and only if for all implicates $A\to v$ of $h$ for which $A\in\cB^*$ and for all minimal transversals $T\in\cB^d$ with $T\cap A=\emptyset$ we have $v\not\in T$. 
Since by Lemma~\ref{l1} we have $(\cB^{V\setminus A})^d=(\cB^d)_{V\setminus A}$ and for all Sperner hyperhraphs $\cH$ the equality $\cup\cH = \cup \cH^d$ holds, we have
$\cup (\cB^d)_{V\setminus A} = \cup \cB^{V\setminus A}$, implying the claim. 
\end{proof}

\begin{lemma}\label{c1b}
Let $\cB\subseteq 2^V$ be a Sperner hypergraph and define $\Psi=\{A\to v\mid A\in\cB^*,v\not\in\cup \cB^{V\setminus A}\}$. Let $\varphi$ be a set of clauses of the form $A\to v$ that are not implicates of $\Phi_\cB$. Then $\cK(\varphi\wedge\Phi_\cB)=\cB$ if and only if $\varphi\subseteq\Psi$. 
\end{lemma}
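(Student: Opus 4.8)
The plan is to reduce both implications to Lemmas~\ref{t1} and~\ref{c1}, after recording two preliminary facts about $h=\varphi\wedge\Phi_\cB$. Since $\Phi_\cB\subseteq h$ we have $h\le\Phi_\cB$, so both lemmas apply to $h$. First I would observe that a clause $A\to v$ is \emph{not} an implicate of $\Phi_\cB$ precisely when $A\in\cB^*$ and $v\notin A$: the only bodies occurring in $\Phi_\cB$ are the hyperedges $B\in\cB$, and an independent set contains no hyperedge, so forward chaining from an independent $A$ fires no clause and $F_{\Phi_\cB}(A)=A$; conversely if $A$ is a key every such clause is an implicate. Consequently every clause of $\varphi$ has the form $A\to v$ with $A\in\cB^*$ and $v\notin A$. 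Second, I would quote the identity extracted inside the proof of Lemma~\ref{c1}: by Seymour's equality $(\cB^{V\setminus A})^d=(\cB^d)_{V\setminus A}$ together with $\cup\cH=\cup\cH^d$, the set $\cup\cB^{V\setminus A}$ equals the union of all minimal transversals of $\cB$ disjoint from $A$; hence $v\in\cup\cB^{V\setminus A}$ iff there is $T\in\cB^d$ with $T\cap A=\emptyset$ and $v\in T$.

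For the direction $\cK(h)=\cB\Rightarrow\varphi\subseteq\Psi$ I would argue by contraposition. If some clause $A\to v\in\varphi$ is not in $\Psi$, then since $A\in\cB^*$ is already forced, the failure must be $v\in\cup\cB^{V\setminus A}$. By the second fact there is $T\in\cB^d$ with $T\cap A=\emptyset$ and $v\in T$; as $A\to v$ is a clause of $h$ and hence an implicate of $h$, Lemma~\ref{t1} gives $\cK(h)\neq\cB$. This is essentially a direct reading of Lemma~\ref{t1}.

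The substantive direction is $\varphi\subseteq\Psi\Rightarrow\cK(h)=\cB$, and the key idea is that forward chaining cannot escape a maximal independent set. I would show $F_h(W)=W$ for every maximal independent set $W$, writing $T=V\setminus W$, a minimal transversal. Examining which clauses of $h$ fire from a set equal to $W$: no clause of $\Phi_\cB$ fires, its body being a hyperedge that $W$ cannot contain; and if a clause $A\to v$ of $\varphi$ has $A\subseteq W$, then $T$ is a minimal transversal disjoint from $A$, so $T\subseteq\cup\cB^{V\setminus A}$, while $A\to v\in\Psi$ gives $v\notin\cup\cB^{V\setminus A}$, whence $v\notin T$, i.e.\ $v\in W$. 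Thus firing never leaves $W$, and by induction on the chaining rounds $F_h(W)=W$. To conclude, if $\cK(h)\neq\cB$ then Lemma~\ref{t1} supplies an implicate $A\to v$ of $h$ and $T\in\cB^d$ with $T\cap A=\emptyset$, $v\in T$; then $W=V\setminus T$ is a maximal independent set with $A\subseteq W$, so $v\in F_h(A)\subseteq F_h(W)=W$, contradicting $v\in T$.

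I expect the forward-chaining invariant to be the only delicate point: one must state carefully that the closure stays equal to $W$ round by round and that each applicable clause has its body inside the current closure. Everything else is bookkeeping over Lemmas~\ref{l1}, \ref{t1}, and~\ref{c1}; in particular the equivalence $v\in\cup\cB^{V\setminus A}\iff\exists\,T\in\cB^d\,(T\cap A=\emptyset,\ v\in T)$ is exactly the computation already performed in the proof of Lemma~\ref{c1}, so I would simply cite it rather than redo it.
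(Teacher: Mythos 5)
Your proof is correct. The paper dispatches this lemma in a single line (``The claim follows by Lemma~\ref{c1}''), and your argument has the same skeleton --- everything is routed through Lemmas~\ref{t1} and~\ref{c1} together with the Seymour identity $\cup\cB^{V\setminus A}=\cup(\cB^d)_{V\setminus A}$ --- but you carry out a step the paper leaves implicit. Lemma~\ref{c1} quantifies over \emph{all} implicates of $h=\varphi\wedge\Phi_\cB$, whereas the hypothesis $\varphi\subseteq\Psi$ only constrains the clauses literally present in $\varphi$; forward chaining could in principle generate new implicates $A\to v$ with $A\in\cB^*$ that violate the condition. Your invariant $F_h(W)=W$ for every maximal independent set $W$ is precisely what excludes this: it shows no implicate of $h$ can reach from a set disjoint from a minimal transversal $T$ into $T$, which is the negation of the certificate required by Lemma~\ref{t1}. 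The other direction (a clause of $\varphi$ outside $\Psi$ witnesses $\cK(h)\neq\cB$) is, as you say, a direct reading of Lemma~\ref{t1} once you note that non-implicates of $\Phi_\cB$ have independent bodies. In short: same route as the paper, but with the nontrivial half actually proved rather than asserted; the closure argument is the right way to do it, and your identification of $v\in\cup\cB^{V\setminus A}$ with the existence of $T\in\cB^d$ disjoint from $A$ containing $v$ is exactly the computation already done in the proof of Lemma~\ref{c1}, so citing it is appropriate.
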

\begin{proof}
The claim follows by Lemma~\ref{c1}. 
\end{proof}

Now we are ready to characterize unique key hypergraphs.

\begin{theorem}\label{t2}
For a Sperner hypergraph $\cB\subseteq 2^V$ the pure Horn function $h=\Phi_\cB$ is the only one with $\cK(h)=\cB$ if and only if for all 
$T\in\cB^d$ and $v\not\in T$ there exists $T'\in\cB^d$ such that $T'\neq T$ and $T'\subseteq T\cup\{v\}$.
\end{theorem}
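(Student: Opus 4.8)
The plan is to reduce the uniqueness statement to a purely combinatorial exchange property of the minimal transversals $\cB^d$ by invoking Lemma~\ref{c1b}, and then to verify that this combinatorial condition is exactly the one displayed in the theorem.

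First I would observe that any pure Horn $h$ with $\cK(h)=\cB$ satisfies $h\le\Phi_\cB$ (every $B\in\cB$ must be a key, and $\cB\subseteq\cK(h)$ forces $h\le\Phi_\cB$). Hence $h$ can be written as $h=\varphi\wedge\Phi_\cB$, where $\varphi$ collects those prime implicates of $h$ that are not implicates of $\Phi_\cB$. By Lemma~\ref{c1b}, $\cK(h)=\cB$ is then equivalent to $\varphi\subseteq\Psi$. Consequently $\Phi_\cB$ is the \emph{unique} pure Horn function with key set $\cB$ if and only if $\Psi$ contains no clause that fails to be an implicate of $\Phi_\cB$: any such clause $A\to v$ would make $\Phi_\cB\wedge(A\to v)$ a strictly smaller, yet key-preserving, function (key-preserving by Lemma~\ref{c1b}, strictly smaller because $A\to v$ is not implied by $\Phi_\cB$), while if every clause of $\Psi$ is already an implicate of $\Phi_\cB$, then $\varphi$ must be empty and $h=\Phi_\cB$.

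Second I would make this condition explicit. For $A\in\cB^*$ no hyperedge of $\cB$ lies inside $A$, so forward chaining from $A$ cannot fire any rule of $\Phi_\cB$ and thus $F_{\Phi_\cB}(A)=A$; hence for $A\in\cB^*$ the clause $A\to v$ is an implicate of $\Phi_\cB$ precisely when $v\in A$. Recalling $\Psi=\{A\to v\mid A\in\cB^*,\ v\notin\cup\cB^{V\setminus A}\}$ together with $\cup\cB^{V\setminus A}=\cup(\cB^d)_{V\setminus A}=\bigcup\{T\in\cB^d\mid T\cap A=\emptyset\}$ (Seymour's Lemma~\ref{l1}, exactly as used in the proof of Lemma~\ref{c1}), the uniqueness criterion becomes the following condition $(\star)$: for every $A\in\cB^*$ and every $v\notin A$ there is a $T\in\cB^d$ with $T\cap A=\emptyset$ and $v\in T$.

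Finally I would prove that $(\star)$ is equivalent to the theorem's exchange condition $(\dagger)$: for all $T\in\cB^d$ and $v\notin T$ there is $T'\in\cB^d$ with $T'\neq T$ and $T'\subseteq T\cup\{v\}$. For $(\dagger)\Rightarrow(\star)$, given $A\in\cB^*$ and $v\notin A$, pick any minimal transversal $T_0\subseteq V\setminus A$ (it exists since $A$ is independent); if $v\in T_0$ we are done, otherwise $(\dagger)$ applied to $T_0,v$ yields $T'\subseteq T_0\cup\{v\}$ with $T'\neq T_0$, which forces $v\in T'$ and $T'\cap A\subseteq(T_0\cup\{v\})\cap A=\emptyset$. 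For $(\star)\Rightarrow(\dagger)$, given $T\in\cB^d$ and $v\notin T$, set $A=V\setminus(T\cup\{v\})$, which is independent because $T\cup\{v\}\supseteq T$ is a transversal; applying $(\star)$ to this $A$ and to $v\notin A$ produces $T'\in\cB^d$ with $T'\cap A=\emptyset$ and $v\in T'$, that is $T'\subseteq T\cup\{v\}$ with $v\in T'$, whence $T'\neq T$. The genuinely combinatorial content, and the step I expect to require the most care, is precisely this equivalence $(\star)\Leftrightarrow(\dagger)$: guessing the correct witness in each direction (a minimal transversal disjoint from $A$ in one, the complement $A=V\setminus(T\cup\{v\})$ in the other) and then tracking the disjointness. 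The earlier reductions are essentially bookkeeping built on Lemmas~\ref{l1}, \ref{c1}, and~\ref{c1b} and on the identity $F_{\Phi_\cB}(A)=A$.
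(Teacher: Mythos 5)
Your proof is correct, and it reaches the theorem by a somewhat cleaner route than the paper's, although both arguments rest on the same two pillars: Lemma~\ref{c1b} and Seymour's Lemma~\ref{l1}. The paper proves the two implications of the theorem separately and, in each, descends to the level of hyperedges: in the ``only if'' direction it extracts a hyperedge $B$ with $B\setminus A$ minimal and $v\in B$, picks $u\in B\cap T$, and verifies by hand that $(T\setminus\{u\})\cup\{v\}$ is a transversal; in the ``if'' direction it derives a contradiction from a pair of hyperedges $B,B'$ meeting the exchanged transversal. You instead push the identity $\cup\cB^{V\setminus A}=\cup(\cB^d)_{V\setminus A}=\bigcup\{T\in\cB^d\mid T\cap A=\emptyset\}$ (already invoked in the proof of Lemma~\ref{c1}) all the way through, which turns the uniqueness criterion into the explicit transversal condition $(\star)$; the remaining equivalence $(\star)\Leftrightarrow(\dagger)$ then collapses to two short witness arguments ($T_0\subseteq V\setminus A$ in one direction, $A=V\setminus(T\cup\{v\})$ in the other). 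What this buys is that the paper's ad hoc exchange computation is replaced by the already-established fact that every element of $\cup\cB^{V\setminus A}$ lies in some minimal transversal disjoint from $A$. Two small points deserve an explicit word, though neither is a gap: the decomposition $h=\Phi_\cB\wedge\varphi$ uses that the prime implicates of a pure Horn function are themselves pure Horn clauses $A\to v$ (true, since $h(\mathbf{1})=1$ excludes negative prime implicates), and the deduction $v\in T'$ from $T'\neq T_0$ and $T'\subseteq T_0\cup\{v\}$ uses that distinct minimal transversals are incomparable.
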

\begin{proof}
For any pure Horn function $h$ with $\cK(h)=\cB$ we have $h\leq\Phi_\cB$.

For the only if direction, take an arbitrary $T\in\cB^d$ and $v\not\in
T$, and let $A=V\setminus (T\cup\{v\})$. By definition of \(A\), we have that
\(\cup\cB^{V\setminus A}\subseteq T\cup\{v\}\). Since \(T\) is a transversal,
we have that
\(T\subseteq\cup\cB^{V\setminus A}\) and that
\(A\to v\) is not an implicate of \(h\).
If \(v\not\in\cB^{V\setminus A}\), then by
Lemma~\ref{c1b} we have \(\cK(\Phi_\cB\land(A\to v))=\cB\) which is a
contradiction with the assumption that \(h\) is the only Horn function with this
property. It follows that \(v\in\cup\cB^{V\setminus A}\) and
altogether we get that $T\cup\{v\}=\cup\cB^{V\setminus A}$. 
In particular, this means that there
exists a $B\in\cB$ with $B\setminus A$ being minimal and $v\in B$.
Since $T$ is a transversal of $\cB$, we have $B\cap T\neq\emptyset$.
Consider an element $u\in B\cap T$. By the minimality of $B\setminus
A$, for every $B'\in\cB$ different from $B$ either
$B'\cap(T\setminus\{u\})\neq\emptyset$ or $v\in B'$. This means that $T'=(T\setminus\{u\})\cup\{v\}$ is
a transversal of $\cB$.

For the opposite direction, take an arbitrary $A\in\cB^*$ and $v\not\in\cup \cB^{V\setminus A}$. Then $A\cup\{v\}\in\cB^*$, hence there exists $T\in\cB^d$ disjoint from $A\cup\{v\}$. By the assumption, there exists $u\in T$ such that $T'=(T\setminus\{u\})\cup\{v\}$ is also a minimal transversal of $\cB$. Therefore there exists $B\in\cB$ for which $B\cap T'=\{v\}$. As $v\not\in\cup \cB^{V\setminus A}$, there exists $B'\in\cB$ such that $B'\setminus A\subsetneq B\setminus A$ and $v\not\in B'\setminus A$. This implies $B'\cap T'=\emptyset$, contradicting $T'$ being a transversal. This shows that the set $\Psi$ in Lemma~\ref{c1b} is empty, proving the uniqueness of $h$.
\end{proof}

We assume that the reader is familiar with the notion of a matroid \cite{nishimura2009lost,whitney1992abstract}.

\begin{corollary}
The cuts of a loopless matroid form a unique key hypergraph.
\end{corollary}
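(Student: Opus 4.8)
The plan is to instantiate the characterization of Theorem~\ref{t2}, so that the whole argument reduces to understanding the clutter $\cB$ of cuts (cocircuits) of the matroid $M$ and its family of minimal transversals $\cB^d$. First I would record that the cocircuits of $M$ form a Sperner hypergraph: they are precisely the circuits of the dual matroid $M^*$, and the circuits of any matroid are pairwise incomparable. Thus $\cB$ is a legitimate input to Theorem~\ref{t2}, and it suffices to verify the stated exchange condition for $\cB$.

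The heart of the argument is to identify $\cB^d$ explicitly as the set of bases of $M$. I would argue by matroid duality. A set $T\subseteq V$ is a transversal of $\cB$ if and only if it meets every cocircuit, equivalently if and only if $V\setminus T$ contains no cocircuit, i.e.\ $V\setminus T$ is independent in $M^*$. Since a set is independent in $M^*$ exactly when its complement is spanning in $M$, this says that $T$ is a transversal of $\cB$ if and only if $T$ is spanning in $M$. The inclusionwise minimal spanning sets are precisely the bases, so $\cB^d$ is exactly the set of bases of $M$.

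With this identification in hand, verifying the condition of Theorem~\ref{t2} becomes the basis exchange property. Given $T\in\cB^d$ (a basis) and $v\notin T$, the set $T\cup\{v\}$ is dependent and contains a unique circuit $C$ with $v\in C$; as $M$ is loopless, $v$ is not a loop, so $C\setminus\{v\}\neq\emptyset$, and for any $u\in C\setminus\{v\}$ the set $T'=(T\setminus\{u\})\cup\{v\}$ is again a basis. Then $T'\in\cB^d$, $T'\neq T$ (since $v\in T'\setminus T$), and $T'\subseteq T\cup\{v\}$, which is exactly what Theorem~\ref{t2} demands. Hence $\Phi_\cB$ is the unique pure Horn function with $\cK(\Phi_\cB)=\cB$, i.e.\ $\cB$ is a unique key hypergraph.

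I expect the main obstacle to be the duality computation of $\cB^d$ in the second step rather than the exchange argument: one must take care that transversals of the cocircuit clutter are the spanning sets (not, say, the circuits), and that inclusionwise minimality matches up with being a basis. It is also worth highlighting where looplessness is used — if $v$ were a loop, then $T\cup\{v\}$ would contain no basis other than $T$, the fundamental circuit would be $\{v\}$, and the condition of Theorem~\ref{t2} would fail. Thus the hypothesis enters precisely to guarantee $C\setminus\{v\}\neq\emptyset$.
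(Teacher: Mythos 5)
Your proposal is correct and follows essentially the same route as the paper: identify $\cB^d$ with the set of bases of $M$ and invoke the basis exchange property to verify the condition of Theorem~\ref{t2}, with looplessness guaranteeing that the fundamental circuit of $v$ is not $\{v\}$ itself. You simply spell out the duality computation and the exchange step that the paper leaves as a one-line sketch.
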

\begin{proof}
If $\cB$ is the set of cuts of a matroid, then $\cB^d$ is the set of bases. If the matroid is loopless, then $\cup\cB^d=V$. The basis exchange axiom implies the necessary and sufficient condition of Theorem~\ref{t2}.
\end{proof}

The following example shows that not all unique key hypergraphs are related to matroids. Let $\cB=\{12,13,14,234\}$, where $V=\{1,2,3,4\}$. Then $\cB^d=\cB$ and satisfies the conditions of Theorem~\ref{t2}, hence $\cB$ is unique key. Clearly, $\cB^d$ is not the set of bases of a matroid.

\begin{remark}
The conditions of Theorem~\ref{t2} can be checked in polynomial time if $\cB^d$ can be generated in (input) polynomial time from $\cB$. For example, if $\cB$ is $2$-monotone \cite{makino1997maximum} or forms the set of bases of a matroid. 
\end{remark}

\section{Unique Key Graphs}
\label{sec:ukg}

Let us now consider Sperner hypergraphs $\cB\subseteq 2^V$ such that $|B|=2$ for all $B\in \cB$ (i.e., graphs). For the sake of simplicity, we use $G=(V,E)$ to denote such a hypergraph $\cB=E$. We say that $G$ is a \emph{unique key graph} if $\cB=E$ is a unique key hypergraph. Following standard graph theory notation, we denote by $N(u)=\{v\in V\mid (u,v)\in E\}$ the set of neighbors of vertex $u\in V$. For a subset $S\subseteq V$ we denote by $N(S)=\left(\bigcup_{u\in S} N(u)\right)\setminus S$ the set of neighbors of $S$. 

\subsection{Complexity of Recognizing Unique Key Graphs}

Given a graph $G=(V,E)$ and a maximal independent set $I\subseteq V$ we say that $u\not\in I$ is an \emph{individual neighbor of $v\in I$} if $N(u)\cap I=\{v\}$. 

\begin{theorem}\label{c3}
A graph $G=(V,E)$ is unique key if and only if for every maximal independent set $I\subseteq V$ and vertex $v\in I$ there exists a vertex $u\not\in I$ that is an individual neighbor of $v$. 
\end{theorem}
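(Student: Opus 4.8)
The plan is to specialize the general characterization in Theorem~\ref{t2} to the graph case, where every hyperedge has size two, and translate the abstract transversal condition into the concrete neighborhood language of the theorem statement. The key observation is the dictionary between the two settings: for a graph $G=(V,E)$ viewed as a Sperner hypergraph $\cB=E$, a set $T\subseteq V$ is a transversal of $\cB$ exactly when its complement is an independent set of $G$, so the minimal transversals $T\in\cB^d$ are precisely the complements of the \emph{maximal} independent sets $I=V\setminus T$. Thus I would first record this correspondence, noting that $v\notin T$ is equivalent to $v\in I$.

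Next I would rewrite the condition of Theorem~\ref{t2} under this dictionary. Fix a maximal independent set $I$ with $T=V\setminus I$, and fix $v\in I$ (so $v\notin T$). Theorem~\ref{t2} demands a distinct minimal transversal $T'\neq T$ with $T'\subseteq T\cup\{v\}$. Since $T'$ is a transversal contained in $T\cup\{v\}$ and differs from $T$, and since any transversal of a graph that drops an element of $T$ must compensate, I expect $T'$ to have the form $T'=(T\setminus\{u\})\cup\{v\}$ for some $u\in T$; indeed this is exactly the structure that appeared in the proof of Theorem~\ref{t2}. The plan is to show that the existence of such a $T'$ is equivalent to the existence of a vertex $u\notin I$ that is an individual neighbor of $v$, i.e.\ $N(u)\cap I=\{v\}$.

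To connect these, I would unpack what it means for $T'=(T\setminus\{u\})\cup\{v\}$ to be a transversal. Its complement is $(I\setminus\{v\})\cup\{u\}$, which must be an independent set of $G$. Since $I$ is independent, the only possible violations are edges incident to $u$; independence of $(I\setminus\{v\})\cup\{u\}$ says precisely that $u$ has no neighbor in $I\setminus\{v\}$, equivalently $N(u)\cap I\subseteq\{v\}$. Because $u\in T=V\setminus I$ and $I$ is a \emph{maximal} independent set, $u$ must have at least one neighbor in $I$, so $N(u)\cap I\neq\emptyset$, forcing $N(u)\cap I=\{v\}$; this is exactly the individual-neighbor condition. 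I also need $T'$ to be \emph{minimal} and distinct from $T$: distinctness is clear since $v\in T'\setminus T$, and minimality in a graph transversal reduces to checking that no vertex of $T'$ can be removed, which follows because $T\cup\{v\}$ covers the relevant edges and $v$ itself is forced (it covers an edge whose other endpoint lies in $I$). Combining both directions gives the stated equivalence.

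The main obstacle I anticipate is the bookkeeping around \emph{minimality} of the transversals rather than the core neighborhood translation. Theorem~\ref{t2} quantifies over $\cB^d$, the minimal transversals, so I must be careful that the $T'$ I produce (or consume) is genuinely minimal, not merely a transversal; conversely, given an individual neighbor $u$, I must verify that $(T\setminus\{u\})\cup\{v\}$ is not just a transversal but a minimal one. In the graph setting this is manageable because minimality of a transversal $T'$ is equivalent to each $w\in T'$ having a private edge (an edge with both endpoints' coverage relying on $w$), and the individual-neighbor hypothesis supplies exactly such a private edge for $v$ via the vertex $u$. Handling this edge-by-edge argument cleanly, while using the maximality of $I$ to rule out degenerate cases, is where I would spend the most care.
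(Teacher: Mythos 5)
Your overall strategy is exactly the paper's: specialize Theorem~\ref{t2} via the dictionary ``minimal transversals of $E$ $=$ complements of maximal independent sets,'' and translate the exchange condition into the individual-neighbor condition. However, there is a genuine gap in the ``individual neighbor $\Rightarrow$ unique key'' direction, precisely at the minimality step you flagged as delicate. The set $T'=(T\setminus\{u\})\cup\{v\}$ need \emph{not} be a minimal transversal, and your justification only exhibits a private edge for $v$; the other vertices $w\in T\setminus\{u\}$ may lose their private edges once $v$ enters the cover. Concretely, take $G$ to be the path $u-v-w$ and $I=\{v\}$ (a maximal independent set), so $T=\{u,w\}$; both $u$ and $w$ are individual neighbors of $v$, but $(T\setminus\{u\})\cup\{v\}=\{v,w\}$ is not a minimal vertex cover, since $\{v\}$ already covers both edges ($w$'s only neighbor $v$ lies inside the cover). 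The same example shows that your ``expectation'' in the converse direction --- that the $T'$ supplied by Theorem~\ref{t2} has the form $(T\setminus\{u\})\cup\{v\}$ --- can also fail: here the only admissible $T'$ is $\{v\}$, which drops two elements of $T$. (That direction is nonetheless salvageable as you set it up: any $u\in T\setminus T'$ lies in the independent set $V\setminus T'\supseteq I\setminus\{v\}$, hence has no neighbor in $I\setminus\{v\}$, and maximality of $I$ forces $N(u)\cap I=\{v\}$.)

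The fix for the problematic direction is short, and it is what the paper does: instead of insisting on $(T\setminus\{u\})\cup\{v\}$ itself, extend the independent set $(I\setminus\{v\})\cup\{u\}$ to a \emph{maximal} independent set $I'$. Since $u\in I'$ and $uv\in E$, we have $v\notin I'$, so $T'=V\setminus I'$ is a minimal transversal with $T'\subseteq (T\cup\{v\})\setminus\{u\}\subseteq T\cup\{v\}$ and $T'\neq T$ (because $u\in T\setminus T'$). Equivalently, one may take any minimal transversal contained in $(T\setminus\{u\})\cup\{v\}$ and observe it avoids $u$, hence differs from $T$. With that replacement your argument is complete and coincides with the paper's proof.
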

\begin{proof}
The minimal transversals of $E$ are exactly the complements of the maximal independent sets of $G$, that is the minimal vertex covers of $G$. For a maximal independent set $I$ with $v\in I$ and $u\not\in I$, the set $(I\setminus\{v\})\cup\{u\}$ is an independent set if and only if $u$ is an individual neighbor of $v$. If this is the case, then $(I\setminus\{v\})\cup\{u\}$ can be extended to a maximal independent set $I'$ of $G$ not containing $v$. Thus the statement follows from Theorem~\ref{t2}.  
\end{proof}

Our next goal is to show that recognizing if $\cB$ is the set of minimal keys of a unique key function is difficult already for hypergraphs of dimension two. Let us consider a CNF $\Phi=C_1\wedge \dots \wedge C_m$ over Boolean variables $x_i$, $i=1,...,n$. Let us associate a graph $G_\Phi$ to $\Phi$ as follows. The set of vertices is $V(G_\Phi)=\{x_i,\bar{x}_i,y_i\mid i=1,...,n\}\cup\{C_j\mid j=1,...,m\}\cup\{z\}$. The edges are defined as follows: vertices $x_i$, $\bar{x}_i$ and $y_i$ form a triangle for all $i=1,...,n$. Vertices $C_j$, $j=1,...,m$ and $z$ form a clique. Finally, all vertices $C_j$ are connected to the literals they include (see Figure~\ref{fig:np}).

\begin{theorem}\label{t3}
A CNF $\Phi$ is not satisfiable if and only if the graph $G_\Phi$ is unique key. 
\end{theorem}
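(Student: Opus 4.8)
The plan is to use the characterization from Theorem~\ref{c3}: the graph $G_\Phi$ is unique key if and only if for every maximal independent set $I$ and every $v\in I$, the vertex $v$ has an individual neighbor outside $I$. I would prove the biconditional in its contrapositive form for one direction and directly for the other, analyzing the rigid structure of $G_\Phi$. The first observation to record is the shape of the maximal independent sets of $G_\Phi$. Because $\{C_1,\dots,C_m,z\}$ is a clique, any independent set contains at most one of these vertices; and because each $x_i,\bar x_i,y_i$ forms a triangle, an independent set contains at most one vertex from each such triangle. I would first argue that in any \emph{maximal} independent set $I$, the clique-vertex chosen is essentially forced to be $z$ unless a clause-vertex $C_j$ can be included, which happens exactly when none of the literals of $C_j$ lies in $I$. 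This is the hinge that ties satisfiability to the individual-neighbor condition.

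The core of the argument is to identify which maximal independent sets can witness a failure of the Theorem~\ref{c3} condition, and to show these correspond precisely to satisfying assignments. First I would show the ``unsatisfiable $\Rightarrow$ unique key'' direction. Suppose $\Phi$ is unsatisfiable; I take an arbitrary maximal independent set $I$ and an arbitrary $v\in I$ and exhibit an individual neighbor. The vertices $y_i$ are the easy cases: each $y_i$ is adjacent only to $x_i,\bar x_i$, so I can check directly whether the requisite individual neighbor exists, and the triangle structure makes $x_i$ or $\bar x_i$ serve this role. The delicate cases are $v=z$ and $v=C_j$. For $v=z$, an individual neighbor of $z$ must be some $C_j$ whose only neighbor in $I$ is $z$, i.e.\ a clause none of whose literals sit in $I$; the existence of such a $C_j$ for \emph{every} maximal $I$ containing $z$ is exactly the statement that the literal set chosen by $I$ fails to satisfy some clause — which is guaranteed precisely because $\Phi$ is unsatisfiable. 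I would convert the assignment encoded by $I$ (reading $x_i\in I$ as ``true'', $\bar x_i\in I$ as ``false'', with the $y_i$/padding handling undetermined variables) into an assignment falsifying some clause.

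For the converse, ``satisfiable $\Rightarrow$ not unique key,'' I would take a satisfying assignment and build a maximal independent set $I$ containing $z$ in which \emph{every} clause-vertex $C_j$ has a literal-neighbor inside $I$ (this is possible exactly because the assignment satisfies every clause). Then $z$ has no individual neighbor: every $C_j$ adjacent to $z$ also meets $I$ in a literal vertex, so $N(C_j)\cap I\supsetneq\{z\}$, and $z$'s only non-clique neighbors are the clique vertices themselves. By Theorem~\ref{c3} this witnesses that $G_\Phi$ is not unique key. The main obstacle I anticipate is the bookkeeping around maximality: I must ensure the independent set I construct is genuinely maximal (so that it is a legitimate witness in Theorem~\ref{c3}) while keeping exactly the intended intersection pattern with the clause-vertices and the triangles, and I must handle variables left unconstrained by a partial assignment by defaulting them into the triangles via the $y_i$ gadget without accidentally creating or destroying individual neighbors. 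Verifying that $z$ (rather than some $C_j$) is the forced clique member in the unsatisfiable direction, and that no $C_j$ can sneak into a maximal set when $\Phi$ is unsatisfiable, is where the argument needs the most care.
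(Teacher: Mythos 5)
Your proposal follows essentially the same route as the paper: both reduce to the individual-neighbor criterion of Theorem~\ref{c3}, observe that every maximal independent set picks exactly one vertex from each triangle $\{x_i,\bar x_i,y_i\}$ and one from the clique on $\{C_1,\dots,C_m,z\}$, dispose of all vertices other than $z$ directly, and identify the maximal independent sets in which $z$ lacks an individual neighbor with satisfying assignments of $\Phi$. The one detail to tighten is the case $v=y_i$: there the candidate individual neighbors are $x_i$ and $\bar x_i$, and it is not the triangle but the fact that $I$ meets the clique $\{C_1,\dots,C_m,z\}$ in at most one vertex (together with $N(x_i)$ and $N(\bar x_i)$ having disjoint intersections with that clique) that guarantees at least one of $x_i,\bar x_i$ has no clause-vertex neighbor in $I$.
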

\begin{proof}
We derive this claim using Theorem \ref{c3}.

Let us note first that every maximal independent set $I\subseteq V(G_\Phi)$ has exactly $n+1$ points, one from each of the following cliques: $T_i=\{x_i,\bar{x}_i,y_i\}$, $i=1,...,n$, and $K=\{C_j\mid j=1,...,m\}\cup\{z\}$. This is because an independent set $I$ can contain at most one vertex from each of these cliques, and if it is disjoint from $T_i$, then $I\cup\{y_i\}$ is also independent. Similarly, if $I\cap K=\emptyset$, then $I\cup\{z\}$ is also independent. We now verify the conditions of Theorem \ref{c3}.

Let $I$ be a maximal independent set, and assume that $v=x_i\in I$ or $v=\bar{x}_i\in I$. In both cases $u=y_i$ is an individual neighbor of $v$. Note next that the sets $N(x_i)\cap K$ and $N(\bar{x}_i)\cap K$ are disjoint, and therefore any independent set is disjoint from at least one of these sets. 
Thus, if $v=y_i\in I$, then either $u=x_i$ or $u=\bar{x}_i$ (or both) is an individual neighbor of $v$. 
If $v=C_j\in I$, then $u=z$ is an individual neighbor of $v$. 

Thus, the only claim left to show is that $\Phi$ is satisfiable if and only if there exists a maximal independent set $I$ of $G$ containing vertex $z$  such that $z$ does not have an individual neighbor. To see this let us first assume that $\Phi$ is satisfiable. Consider the set $I$ that contains the literals that are true in a satisfying assignment and vertex $z$. Since every clause $C_j$ is satisfied, it has a neighbor in $I$ other than $z$, and thus $z$ does not have an individual neighbor. For the other direction let us assume that $I$ is a maximal independent set, containing $z$ such that $z$ does not have an individual neighbor. Therefore, every clause $C_j$ must have a neighbor in $I$, which must be a literal. Since $(x_i,\bar{x}_i)$ is an edge of $G$ for all $i=1,...,n$,  $I$ cannot contain a complementary pair of literals, and thus the literals in $I$ can be set to true simultaneously, satisfying $\Phi$. 
\end{proof}

\begin{figure}[t!]
  \centering
  \includegraphics[width=0.6\linewidth]{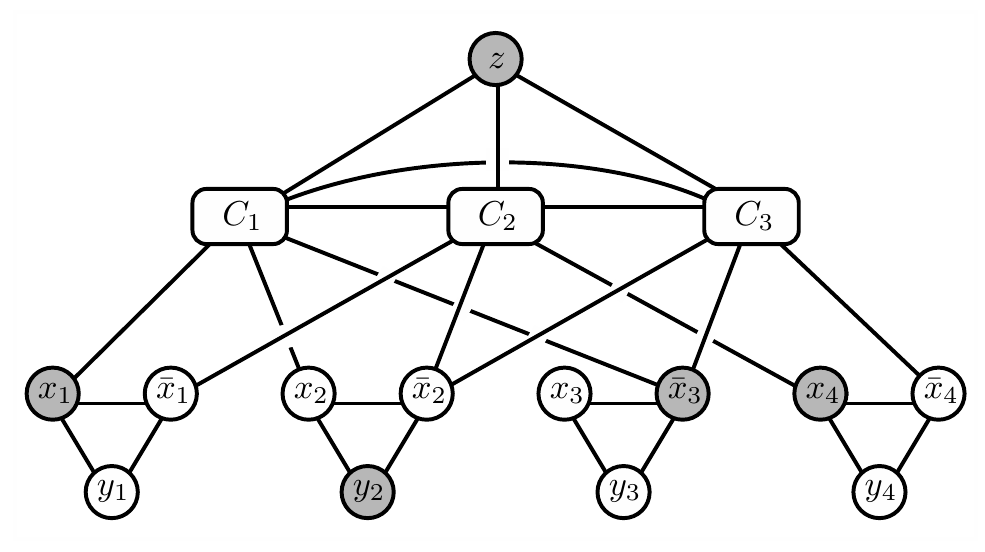}
  \caption{The graph $G_\Phi$ corresponding to CNF formula $\Phi=(x_1\vee x_2\vee\bar{x}_3)\wedge(\bar{x}_1\vee\bar{x}_2\vee x_4)\wedge(\bar{x_2}\vee\bar{x}_3\vee\bar{x}_4)$. Grey vertices form a maximal independent set corresponding to a satisfying truth assignment. Note that $z$ has no individual neighbor.}
\label{fig:np}
\end{figure}

\begin{corollary}\label{c4}
Deciding if a hypergraph is unique key is co-$\mathsf{NP}$-complete already for hypergraphs of dimension 2.
\end{corollary}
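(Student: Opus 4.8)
The plan is to combine the two preceding theorems. For co-$\mathsf{NP}$ membership — already for arbitrary dimension-two hypergraphs, i.e.\ graphs — I would argue that the complementary problem, deciding that $G$ is \emph{not} unique key, lies in $\mathsf{NP}$. By Theorem~\ref{c3}, $G$ fails to be unique key precisely when there is a maximal independent set $I$ and a vertex $v\in I$ admitting no individual neighbor. Such a pair $(I,v)$ is a certificate of polynomial size: one checks in polynomial time that $I$ is independent and maximal, that $v\in I$, and that no $u\in N(v)\setminus I$ satisfies $N(u)\cap I=\{v\}$. Hence non-unique-key is in $\mathsf{NP}$, and the recognition problem is in co-$\mathsf{NP}$.

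For co-$\mathsf{NP}$-hardness I would invoke the construction $\Phi\mapsto G_\Phi$ of Theorem~\ref{t3} as a polynomial-time many-one reduction. The graph $G_\Phi$ has $3n+m+1$ vertices and is plainly computable in time polynomial in the length of $\Phi$, and crucially every edge has size two. By Theorem~\ref{t3}, $\Phi$ is unsatisfiable if and only if $G_\Phi$ is unique key. Since $\mathsf{SAT}$ is $\mathsf{NP}$-complete, its complement (the set of unsatisfiable CNFs) is co-$\mathsf{NP}$-complete, so this reduction shows that recognizing unique key graphs is co-$\mathsf{NP}$-hard.

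Combining the two directions yields co-$\mathsf{NP}$-completeness, and because $G_\Phi$ is a graph the hardness already holds for hypergraphs of dimension two, as claimed. The only point requiring genuine care is the membership argument: one must observe that Theorem~\ref{c3} supplies a succinct, polynomially checkable witness of non-uniqueness in the dimension-two case, whereas the characterization of Theorem~\ref{t2} for general hypergraphs refers to the (possibly exponentially large) family $\cB^d$ and so does not by itself certify membership in co-$\mathsf{NP}$. Beyond recording this observation I do not anticipate any substantive obstacle, since both ingredients are already established.
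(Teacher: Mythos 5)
Your proposal is correct and follows exactly the paper's route: co-$\mathsf{NP}$ membership via a polynomially checkable witness (a maximal independent set $I$ and a vertex $v\in I$ with no individual neighbor, per Theorem~\ref{c3}), and hardness via the reduction $\Phi\mapsto G_\Phi$ of Theorem~\ref{t3}. The paper simply states the membership part as "easy to see"; your proposal fills in that detail but adds no genuinely different idea.
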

\begin{proof}
It is easy to see that the problem belongs to co-$\mathsf{NP}$, and thus the statement follows by Theorem~\ref{t3}.
\end{proof}

\subsection{Bipartite Graphs}

\begin{theorem}
A bipartite graph $G=(V,E)$ without isolated vertices is unique key if and only if $E$ is a perfect matching.  
\end{theorem}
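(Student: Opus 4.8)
The plan is to use the characterization in Theorem~\ref{c3}: a graph $G=(V,E)$ is unique key if and only if for every maximal independent set $I$ and every vertex $v\in I$, there is an individual neighbor $u\notin I$ of $v$ (i.e.\ $N(u)\cap I=\{v\}$). Let me check both directions of the equivalence for bipartite $G$ with parts $X$ and $Y$ and no isolated vertices.

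First I would prove the easy direction: if $E$ is a perfect matching, then $G$ is unique key. If $E$ is a perfect matching, each vertex $v$ has a unique neighbor $u=m(v)$, and the only maximal independent set structure is constrained, but more directly: for any maximal independent set $I$ and any $v\in I$, the matched partner $u=m(v)$ satisfies $N(u)=\{v\}$, so $N(u)\cap I\subseteq\{v\}$, and since $I$ is independent $u\notin I$, giving $N(u)\cap I=\{v\}$. Hence $u$ is an individual neighbor of $v$, and Theorem~\ref{c3} applies.

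The substantive direction is the converse: I would assume $G$ is a bipartite unique key graph with no isolated vertices and show $E$ is a perfect matching. The strategy is to show every vertex has degree exactly one. Suppose some vertex $w$ has degree at least two. The idea is to build a maximal independent set $I$ and find a vertex $v\in I$ with no individual neighbor, contradicting Theorem~\ref{c3}. Bipartiteness is the key tool here: taking $I$ to contain one whole side (say $X$) extended to a maximal independent set, every vertex $u\notin I$ (which will lie in $Y$) has all its neighbors in $X\subseteq I$; if such a $u$ has degree $\ge 2$, it witnesses that the two vertices $v_1,v_2\in N(u)\cap I$ each fail to be ``individually'' neighbored by $u$ — but I need to rule out that they get an individual neighbor elsewhere. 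More carefully, the plan is: pick a vertex $v$ of degree $\ge2$ on one side, and construct a maximal independent set $I$ containing $v$ such that every neighbor $u$ of $v$ also has another neighbor in $I$, so that $N(u)\cap I\neq\{v\}$ for all $u\in N(v)$, meaning $v$ has no individual neighbor. This is where bipartiteness and the absence of isolated vertices are exploited: on a bipartite graph one can freely include an entire color class minus a controlled set.

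The main obstacle I anticipate is the careful construction of the maximal independent set $I$ that simultaneously contains the chosen high-degree vertex $v$ and ``covers'' each of $v$'s neighbors with a second neighbor from $I$. The cleanest route is likely a minimal counterexample or extremal argument: assume $E$ is not a perfect matching, so either some matching leaves a vertex uncovered or some vertex has degree $\ge2$; in the bipartite setting one can take a maximum matching and use an exposed vertex or an augmenting-path-type structure to build the offending $I$. I would first reduce to showing ``every degree is one'' and then handle the degree-$\ge2$ vertex by exhibiting the bad independent set, which should follow from choosing $I$ to be a maximal independent set containing $N(v)$'s covering partners while keeping $v$ inside, using that $G$ has no isolated vertices to guarantee the needed second neighbors exist.
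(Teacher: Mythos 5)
Your forward direction is correct and essentially identical to the paper's: the matched partner of $v$ is an individual neighbor with respect to every maximal independent set containing $v$.

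The reverse direction, however, has a genuine gap, and it is not just a matter of filling in details: the specific strategy you commit to cannot work. You propose to take a vertex $v$ of degree at least $2$ and build a maximal independent set $I$ containing $v$ with respect to which $v$ itself has no individual neighbor. Consider the path on three vertices (equivalently $K_{1,2}$) with center $c$ and leaves $\ell_1,\ell_2$. Here $c$ has degree $2$, but the only maximal independent set containing $c$ is $\{c\}$, and with respect to it both leaves are individual neighbors of $c$ (each $\ell_i$ satisfies $N(\ell_i)\cap\{c\}=\{c\}$). So the condition of Theorem~\ref{c3} is never violated at the high-degree vertex; in this example it is violated at a leaf, namely for $I=\{\ell_1,\ell_2\}$ and $v=\ell_1$, whose unique neighbor $c$ satisfies $N(c)\cap I=\{\ell_1,\ell_2\}\neq\{\ell_1\}$. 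Your plan of localizing the failure at the degree-$\ge 2$ vertex therefore breaks on the very first nontrivial instance, and the vague fallbacks (minimal counterexample, augmenting paths) do not identify where the failure actually occurs.

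The paper's argument avoids this entirely and is much shorter: since there are no isolated vertices, each color class $A$ and $B$ is itself a maximal independent set. Applying Theorem~\ref{c3} with $I=A$, every $v\in A$ has an individual neighbor $u\in B$, and by bipartiteness $N(u)\subseteq A$, so $N(u)=\{v\}$, i.e.\ $u$ has degree one; symmetrically for $B$. Thus every vertex has a neighbor of degree one, which forces $E$ to be a matching: if $\deg(v)\ge 2$, then $v$'s degree-one neighbor $u$ has $v$ as its only neighbor, so $u$ has no degree-one neighbor, a contradiction. Absence of isolated vertices then makes the matching perfect. I recommend you replace your construction with this choice of $I$ as an entire color class, which is the step your sketch is missing.
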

\begin{proof}
If $E$ forms a perfect matching on $V$, then every maximal independent set $I$ contains exactly one end vertex of every edge in $E$. For any vertex $v\in I$, the other end vertex $u$ of the matching edge incident to $v$ is an individual neighbor of $v$, thus $G$ is unique key by Theorem~\ref{c3}.

For the other direction, let $A$ and $B$ denote the color classes of $G$, that is, $V=A\cup B$. By the assumption that there are no isolated vertices in $G$, both $A$ and $B$ are maximal independent sets. By Theorem~\ref{c3}, every vertex $v\in V$ has an individual neighbor in the opposite color class, that is, a neighbor of degree exactly one. This implies that $E$ is a matching as stated.  
\end{proof}

\subsection{Bounded Treewidth Graphs}

\begin{theorem}
   \label{thm:treewidth}
   For graphs of bounded treewidth, it is possible to decide in linear
   time if a graph is a unique key graph.
\end{theorem}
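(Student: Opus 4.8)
The plan is to use Courcelle's theorem: express the unique-key property in monadic second-order (MSO) logic over the graph structure, and then invoke the fact that any MSO-expressible graph property can be decided in linear time on graphs of bounded treewidth. By Theorem~\ref{c3}, a graph $G$ is a unique key graph if and only if for every maximal independent set $I$ and every vertex $v\in I$, there exists a vertex $u\notin I$ with $N(u)\cap I=\{v\}$. The whole task reduces to writing this down as a single MSO formula.

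\medskip

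First I would set up the auxiliary predicates. ``$I$ is independent'' is expressed by $\forall x\,\forall y\,\bigl((x\in I \wedge y\in I)\rightarrow \neg\,\mathrm{adj}(x,y)\bigr)$, where $\mathrm{adj}(x,y)$ is the built-in edge relation; ``$I$ is a \emph{maximal} independent set'' adds the clause that every vertex outside $I$ has a neighbor inside $I$, namely $\forall x\,\bigl(x\notin I\rightarrow\exists y\,(y\in I\wedge\mathrm{adj}(x,y))\bigr)$. Call the conjunction $\mathrm{MIS}(I)$; this is a first-order (hence MSO) formula with one free set variable $I$. The individual-neighbor condition for a fixed $v$ and candidate $u$ is the first-order formula
\[
\mathrm{Ind}(u,v,I)\;=\;\bigl(u\notin I\bigr)\wedge\mathrm{adj}(u,v)\wedge\forall w\,\bigl((w\in I\wedge\mathrm{adj}(u,w))\rightarrow w=v\bigr).
\]
The complete unique-key property then becomes
\[
\forall I\,\Bigl(\mathrm{MIS}(I)\rightarrow\forall v\,\bigl(v\in I\rightarrow\exists u\,\mathrm{Ind}(u,v,I)\bigr)\Bigr),
\]
which is a closed MSO sentence $\varphi_{\mathrm{uk}}$ quantifying over a single set variable $I$ and otherwise only over vertices. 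Since $G\models\varphi_{\mathrm{uk}}$ holds exactly when $G$ is unique key by Theorem~\ref{c3}, Courcelle's theorem yields a linear-time decision procedure on any class of graphs of treewidth bounded by a constant.

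\medskip

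The part requiring the most care is mainly bookkeeping rather than a genuine obstacle: one must confirm that the formula really captures Theorem~\ref{c3} verbatim (in particular that $u$ is allowed to range over \emph{all} vertices outside $I$, and that the ``$N(u)\cap I=\{v\}$'' condition is faithfully the conjunction of $\mathrm{adj}(u,v)$ with the statement that $v$ is the \emph{only} $I$-neighbor of $u$), and that $\varphi_{\mathrm{uk}}$ is a legitimate MSO$_1$ sentence over the relational vocabulary consisting of vertices, the single set variable, and the adjacency relation. Because we quantify over a single monadic (set) variable $I$ together with element variables, the sentence lies in MSO$_1$, so it suffices that the treewidth is bounded; one could equally phrase it in MSO$_2$ and appeal to the pathwidth/treewidth version of Courcelle's theorem. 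The linear-time bound follows since the size $|\varphi_{\mathrm{uk}}|$ is a fixed constant independent of $G$, so the hidden multiplicative constant depends only on the treewidth bound and on $|\varphi_{\mathrm{uk}}|$. I expect no deeper difficulty, as the entire argument is a direct translation of the combinatorial characterization into logic followed by a black-box application of the algorithmic metatheorem.
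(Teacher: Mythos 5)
Your proof takes essentially the same approach as the paper: encode the characterization of Theorem~\ref{c3} as an MSO$_1$ sentence with a single set variable $I$ and apply Courcelle's theorem to get a linear-time decision procedure on bounded-treewidth graphs. Your formula is in fact slightly more careful than the one written in the paper (you explicitly require $u\notin I$, $\mathrm{adj}(u,v)$, and maximality of $I$, which the paper's predicates leave implicit), but the argument is identical in substance.
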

\begin{proof}
   We will formulate the problem in monadic second order logic (MSO),
   the result then follows by Courcelle's
   theorem~\cite{courcelle1990}. Assume that a graph \(G=(V, E)\) is
   described with a set of vertices \(V\) and an adjacency relation
   \(\operatorname{adj}(u, v)\) which represents the set of edges. The
   unique key property can then be described as the predicate
   \begin{equation*}
      \operatorname{UniqKey}(G)=(\forall I\subseteq V)(\forall v\in I)(\exists u\in V)[\operatorname{IndSet}(I)
      \to \operatorname{IndNeigh}(I, v, u)]
   \end{equation*}
   where \(\operatorname{IndSet(I)}\) is a predicate satisfied if
   \(I\) is an independent set of \(G\) and
   \(\operatorname{IndNeigh}(I, v, u)\) is satisfied if \(v\in I\) and \(u\) is its
   individual neighbour. These predicates can be defined in the
   following way.
   \begin{align*}
      \operatorname{IndSet}(I)&=(\forall u\in I)(\forall v\in I)[\neg \operatorname{adj}(u, v)]\\
      \operatorname{IndNeigh}(I, v, u)&=(\forall w\in I)[\operatorname{adj}(w, u)\to w=v]\qedhere
   \end{align*}
\end{proof}

Since the formulation of \(\operatorname{UniqKey(G)}\) uses only
quantification over a set of vertices \(I\) and not over any set of
edges, we can use it to show the following corollary.

\begin{corollary}
   For graphs of bounded clique-width, it is possible to decide in
   linear time if a graph is a unique key graph.
\end{corollary}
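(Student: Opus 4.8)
The plan is to reuse the monadic second-order description of the unique key property constructed in the proof of Theorem~\ref{thm:treewidth} and to observe that it lives in the fragment for which a linear-time metatheorem is available over graphs of bounded clique-width. Concretely, the predicate $\operatorname{UniqKey}(G)$, together with the auxiliary predicates $\operatorname{IndSet}$ and $\operatorname{IndNeigh}$, quantifies only over a vertex subset $I$ and over individual vertices $v,u,w$; the adjacency relation $\operatorname{adj}(\cdot,\cdot)$ is consulted but never quantified over, and no variable ranges over sets of edges. Thus the formula belongs to $\mathrm{MSO}_1$, the fragment of monadic second-order logic in which second-order quantification is restricted to vertex sets.

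The decisive tool is then the clique-width analogue of Courcelle's theorem: by the metatheorem of Courcelle, Makowsky and Rotics, every fixed $\mathrm{MSO}_1$-definable property can be decided in time linear in the size of the graph on any class of graphs of bounded clique-width, provided a clique-width expression (a $k$-expression) of bounded width is available. Applying this directly to the formula $\operatorname{UniqKey}(G)$ yields the claimed linear-time recognition of unique key graphs within any class of bounded clique-width, exactly as Courcelle's theorem was applied to obtain Theorem~\ref{thm:treewidth}.

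The only point that requires care — and the reason this is stated as a separate corollary rather than folded into Theorem~\ref{thm:treewidth} — is the distinction between $\mathrm{MSO}_1$ and $\mathrm{MSO}_2$. Courcelle's theorem for bounded treewidth permits quantification over edge sets ($\mathrm{MSO}_2$), whereas its bounded clique-width counterpart applies only to $\mathrm{MSO}_1$. Hence the essential step, and the main thing to get right, is the verification (already recorded in the sentence preceding the corollary) that our formulation never quantifies over edges; had the natural description of the unique key property needed edge-set quantification, the clique-width result would not have applied. A secondary caveat is that the linear-time guarantee presupposes that a bounded-width clique-width expression is supplied or computed; for a fixed class of bounded clique-width such an expression can be obtained within the allotted time, so this does not affect the conclusion.
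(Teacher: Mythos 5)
Your proposal is correct and follows exactly the paper's argument: observe that the formula $\operatorname{UniqKey}(G)$ from Theorem~\ref{thm:treewidth} quantifies only over vertex sets (i.e., lies in $\mathrm{MSO}_1$) and apply the clique-width version of Courcelle's theorem. The additional remarks on the $\mathrm{MSO}_1$/$\mathrm{MSO}_2$ distinction and on obtaining a $k$-expression are accurate elaborations of the same route.
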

\begin{proof}
   Follows by using a version of Courcelle's theorem for
   clique-width~\cite{courcelle2000} on the formulation of
   predicate~\(\operatorname{UniqKey}(G)\) in the proof of
   Theorem~\ref{thm:treewidth}.
\end{proof}

\subsection{Graphs With Small Induced Matchings}

\begin{theorem}
Let $G=(V,E)$ be a graph, and assume that the size of the largest induced matching of $G$ is bounded by a constant. Then there is an efficient algorithm to decide if $G$ is a unique key graph.
\end{theorem}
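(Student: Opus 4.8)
By Theorem~\ref{c3}, $G$ fails to be a unique key graph precisely when there exists a maximal independent set $I$ and a vertex $v\in I$ such that every neighbor $u$ of $v$ outside $I$ satisfies $N(u)\cap I\neq\{v\}$, i.e.\ every such $u$ also touches $I$ in some vertex other than $v$. Equivalently, $G$ is \emph{not} unique key iff there is a maximal independent set $I$ and a vertex $v\in I$ with no individual neighbor; the negation is what I want to certify or refute. The plan is to show that any such ``bad'' witness $(I,v)$ can be assumed to be anchored on a small structure, and to search exhaustively over all such small structures. Since the largest induced matching has size bounded by a constant $k$, I expect that the relevant part of a witness can be encoded by a vertex set of size $O(k)$, making the search polynomial.

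\textbf{Key steps.} First I would fix a candidate pair $(v,u)$ where $u\in N(v)$ is the vertex we hope to block from being an individual neighbor, together with a ``blocking'' vertex $v'\in N(u)$ with $v'\neq v$. The requirement that $v$ has no individual neighbor means: for \emph{every} $u\in N(v)$ there is some $v'(u)\in (N(u)\cap I)\setminus\{v\}$. So a witness for non-unique-key is a maximal independent set $I$ containing $v$, together with, for each neighbor $u$ of $v$, a vertex $v'(u)\in I\setminus\{v\}$ adjacent to $u$. The second step is the structural heart: I would argue that among the chosen blockers $\{v'(u): u\in N(v)\}\subseteq I$ one can extract an induced matching or a small dominating-type structure whose size controls everything. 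Concretely, the vertices $v'(u)$ lie in the independent set $I$ (so they are pairwise nonadjacent), and each is matched to a private-ish neighbor $u$; the bound on induced matching size forces only boundedly many distinct such pairs to be ``essential,'' and the remaining neighbors of $v$ must be dominated by this bounded set. Thus the obstruction localizes to a seed of $O(k)$ vertices.

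\textbf{The algorithm.} Having localized the obstruction, I would enumerate all vertex subsets $S$ of size at most $c\cdot k$ (for an appropriate constant $c$) as candidate seeds, together with a designated $v\in S$. For each seed I would check in polynomial time whether it extends to a genuine witness: namely, whether there is a maximal independent set $I\supseteq S$ with $v\in I$ such that the chosen blockers in $S$ already kill every potential individual neighbor of $v$. The key subroutine is to decide, given the seed, whether \emph{every} neighbor $u$ of $v$ has a neighbor in $I\setminus\{v\}$ while keeping $I$ independent; because the seed already dominates the neighborhood of $v$ by construction, the extension to a maximal independent set only needs to avoid reintroducing $v$'s individuality, and greedy extension away from $N[v]$ suffices. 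If some seed yields a witness, report ``not unique key''; otherwise report ``unique key.'' The number of seeds is $n^{O(k)}$, which is polynomial for constant $k$.

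\textbf{Main obstacle.} I expect the hard part to be the structural localization step: proving that a bad witness always admits a seed of size $O(k)$, and that checking extendability of a seed is genuinely polynomial rather than hiding another maximal-independent-set computation with the same obstruction. The delicate point is that the blockers $v'(u)$ must lie in the \emph{same} independent set $I$, and we need the bounded-induced-matching hypothesis to guarantee that these blockers, each matched to a private neighbor $u\in N(v)$, cannot all be distinct and nonredundant unless their count is bounded by the induced matching number. Making this pigeonhole rigorous---identifying which $(u,v'(u))$ pairs form an induced matching and bounding the rest by domination---and then verifying that the greedy completion of the seed to a maximal independent set never accidentally creates an individual neighbor for $v$, is where the real work lies.
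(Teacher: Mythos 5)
Your reduction to Theorem~\ref{c3} is the right starting point, but the structural localization step --- the claim that a witness $(I,v)$ with no individual neighbour can be anchored on a seed of $O(k)$ blockers, each with a private neighbour in $N(v)$, with the rest of $N(v)$ dominated by this bounded set --- is false, and it is exactly the step you flagged as the ``heart'' of the argument. Counterexample: take a vertex $v$ whose neighbourhood $N(v)=\{u_1,\dots,u_m\}$ is a clique, and attach to each $u_i$ a vertex $b_i$ adjacent only to $u_i$ (the $b_i$ are pairwise nonadjacent and nonadjacent to $v$). Every edge of this graph meets $\{u_1,\dots,u_m\}$, and any two distinct $u_i,u_j$ are adjacent, so the induced matching number is $1$. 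Yet $I=\{v,b_1,\dots,b_m\}$ is a maximal independent set in which $v$ has no individual neighbour, and every blocker $b_i$ is essential: it is the unique vertex of $I\setminus\{v\}$ adjacent to $u_i$. So the minimal blocker set has size $m$, unbounded while $k=1$, no bounded seed dominates $N(v)$, and your algorithm (which only searches seeds of size $O(k)$) would wrongly declare this graph unique key. The reason your pigeonhole cannot work is the one you half-identified: the pairs $(u_i,b_i)$ need not form an induced matching because the private neighbours $u_i$ all lie in $N(v)$ and may be pairwise adjacent, which is precisely what happens above.

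The paper's proof avoids all of this and is essentially a two-line reduction to known results: by a theorem of Balas and Yu, a graph whose largest induced matching has size at most $p$ has at most $n^{2p}$ maximal independent sets; for constant $p$ these can all be generated in polynomial time (e.g.\ by the Tsukiyama--Ide--Ariyoshi--Shirakawa enumeration algorithm, which has polynomial delay), and then the condition of Theorem~\ref{c3} is checked directly on each maximal independent set $I$ and each $v\in I$ by inspecting the neighbourhoods of the vertices in $N(v)$. In other words, the bounded induced matching hypothesis is used globally, to bound the \emph{number of candidate witnesses $I$}, not locally, to shrink a single witness. If you want to rescue your write-up, replace the localization step with this counting argument; as it stands the proposal has a genuine gap.
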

\begin{proof}
If $\cB=E$ then $\cB^d$ is the family of minimal vertex covers that are exactly the complements of maximal independent sets. It is known that if the largest induced matching in $G$ has size at most $p$, then it has at most $n^{2p}$ maximal independent sets \cite{balas1989graphs}. Thus if $p$ is a constant, then all of them can be generated in polynomial time \cite{tsukiyama1977new}. This in turn implies that the conditions of Theorem~\ref{c3} can be checked in polynomial time.
\end{proof}

\section{Generating Minimal Keys}
\label{sec:lex}

We shift the focus from unique key hypergraphs to the problem of generating all possible minimal keys of a given pure Horn function.  The proposed approach can be applied for various problems, for example for generating all \emph{minimal target sets} of a graph. Note that the number of minimal keys can be exponential in the size of the input CNF, hence the efficiency of generating them is measured by the time spent between outputting two of them. A \emph{generation algorithm} outputs the objects in question one by one without repetition. Such a procedure is called \emph{polynomial delay} if the computing time between any two consecutive outputs is bounded by a polynomial of the input size.

Given a pure Horn CNF $\Phi$, we associate to it a directed graph $D_\Phi=(\cK(\Phi),E)$ as follows. For a minimal key $K\in\cK(\Phi)$, an arbitrary variable $v\in K$, and a clause $A\to v\in\Phi$, we define the set $S=(K-v)\cup A$. Note that $S$ is a key of $\Phi$, hence there exists $K'\in\cK(\Phi)$ with $K'\subseteq S$. We find such a $K'$ using a greedy procedure by dropping variables from $S$ one-by-one, and checking at each step  if the remaining set is a key by using forward chaining with respect to $\Phi$. We include the directed edge $KK'$ into $E$ for all possible choices $v\in K$ and $A\to v\in\Phi$. For some $v\in K$ we might not have a clause $A\to v$ in which case we do not generate the corresponding $K'$. Note that every vertex in $D_\Phi$ has at most $m$ outgoing edges. Let us remark that the final graph $D_\Phi$ is not uniquely defined as its edge set depends on the choices of the $K'$ sets in the above procedure.

\begin{lemma}\label{lem:sc}
$D_\Phi$ is strongly connected.
\end{lemma}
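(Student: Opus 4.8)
The plan is to show that from any minimal key $K\in\cK(\Phi)$ one can reach, via directed paths in $D_\Phi$, some canonical minimal key, and that this canonical key can in turn reach any other minimal key; strong connectivity then follows by composing paths. A natural candidate for the canonical target is a minimal key of \emph{smallest cardinality}, or more usefully a minimal key contained in the forward chaining closure structure that every key must ``collapse'' toward. The key intuition is that each directed edge $KK'$ is generated by replacing a variable $v\in K$ with the body $A$ of a defining clause $A\to v\in\Phi$ and then greedily shrinking; since $A\to v$ is an implicate of $\Phi$, the set $S=(K-v)\cup A$ is genuinely a key, so the construction never leaves $\cK(\Phi)$ and the edge is well defined.

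First I would analyze what a single edge does. Moving from $K$ to $K'\subseteq (K-v)\cup A$ trades the head $v$ for (a subset of) its body $A$. Intuitively this pushes the key ``backward'' along implications, i.e.\ toward variables that are causes rather than effects. So I would try to define a potential or ordering on minimal keys — for instance a lexicographic order based on a fixed ordering of $V$, or the multiset of forward-chaining ``depths'' of the variables in $K$ — and argue that from any $K$ that is not already minimal in this order, there exists at least one outgoing edge strictly decreasing the potential. Repeating, every $K$ reaches the unique potential-minimal key $K^*$. The main work is choosing the potential so that (i) it is strictly monotone along \emph{some} available edge out of every non-minimal vertex, and (ii) the minimizer is unique, so that all vertices funnel to the same $K^*$.

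The harder half is the reverse: showing $K^*$ (hence every vertex) can reach every other minimal key, i.e.\ that the ``backward'' edges suffice to go ``forward'' as well. Here I would argue reachability is symmetric in the following sense: if $KK'$ is a legitimate trade along $A\to v$, then one can realize the reverse trade as a directed path by a sequence of elementary replacements, using that both $K$ and $K'$ are minimal keys sharing most of their variables. Concretely, given two minimal keys $K_1,K_2$, I would induct on $|K_1\setminus K_2|$: pick $v\in K_1\setminus K_2$, use that $K_2$ is a key to find a clause whose body lies ``closer'' to $K_2$, and take the corresponding edge to a minimal key agreeing with $K_2$ on more variables, maintaining membership in $\cK(\Phi)$ at each step via forward chaining. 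The obstacle is that the greedy shrinking step makes $D_\Phi$ non-canonical, so I cannot control \emph{which} $K'$ an edge lands on; the proof must therefore establish strong connectivity for \emph{every} valid choice of edges, arguing only from the guaranteed existence of \emph{some} outgoing edge that decreases (resp.\ adjusts) the relevant measure, never from a specific one.

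I expect the main difficulty to be controlling the greedy step: because $K'$ is only required to be \emph{some} minimal key inside $(K-v)\cup A$, the argument must be robust to adversarial choices. I would handle this by proving a one-step reachability lemma of the form ``for any minimal key $K$ and any $v\in K$ with a defining clause, every resulting $K'$ still satisfies $K'\subseteq (K-v)\cup A\subseteq V$,'' and then combining it with the potential/symmetry argument so that the conclusion depends only on the containment $K'\subseteq(K-v)\cup A$ and not on the internal details of the greedy dropping order.
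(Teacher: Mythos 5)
Your plan correctly identifies the two load-bearing ideas (a potential that some outgoing edge strictly decreases, and robustness to the adversarial greedy shrinking), but as written it has a genuine gap in the ``harder half,'' and the specific measures you propose do not work. An edge out of $K_1$ replaces a single variable $v$ by the entire body $A$ of a clause $A\to v$ and then shrinks greedily; this can \emph{increase} $|K_1\setminus K_2|$ (the body $A$ may contain many variables outside $K_2$) and the greedy step may discard variables of $K_1\cap K_2$, so your proposed induction on $|K_1\setminus K_2|$ has no guarantee of progress. Likewise, an absolute potential (a fixed lexicographic order on $V$, or depths measured from nothing in particular) gives you neither a unique minimizer $K^*$ nor a guaranteed decreasing edge out of every non-minimizer, so the first half of your plan is also not secured. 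The whole two-stage architecture (funnel everything into a canonical $K^*$, then show $K^*$ reaches everything) is where the difficulty hides, and you correctly sense that the second stage is the problem --- but you do not resolve it.

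The missing idea is to make the potential \emph{relative to the target key} $K_2$, which lets you prove reachability for every ordered pair $(K_1,K_2)$ directly and dispenses with the canonical $K^*$ entirely. Concretely: since $K_2$ is a key, forward chaining from $K_2$ partitions $V$ into layers $L_0=K_2$ and $L_{i+1}=\{v\mid \exists\, A\to v\in\Phi,\ A\subseteq L_0\cup\dots\cup L_i\}\setminus(L_0\cup\dots\cup L_i)$. Set $\varrho_i=|L_i\cap K_1|$ and order the vectors $(\varrho_0,\dots,\varrho_t)$ reverse-lexicographically (deepest coordinate first). Taking $v\in K_1$ in the deepest nonempty layer $L_i$ with $i\geq 1$, some clause $A\to v$ has $A$ contained in strictly earlier layers, so $S=(K_1-v)\cup A$ satisfies $|L_i\cap S|<|L_i\cap K_1|$ and $|L_j\cap S|=0$ for $j>i$; any minimal key $K_3\subseteq S$ chosen by the greedy step only shrinks these counts further, so the measure strictly decreases no matter which $K_3$ the adversary picks. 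The minimum of the order forces $K_1\subseteq L_0=K_2$, hence $K_1=K_2$ by minimality of both keys. This single induction gives a directed path from any $K_1$ to any $K_2$, which is exactly strong connectivity.
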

\begin{proof}
First we introduce a measure between minimal keys. Let $K_1,K_2\in\cK(\Phi)$ be two minimal keys. We know that the forward chaining closure of $K_2$ is equal to $V$. Let us partition $V$ into layers $L_0,L_1,\dots,L_t$ where $L_0:=K_2$, define $L_{i+1}:=\{v\in V\setminus L_i\mid \text{there exists}\ A\to v\in\Phi\ \text{s.t.}\ A\subseteq \bigcup_{j=0}^i L_j\}$, and $t$ is the largest index such that $L_t\neq\emptyset$. Let $\varrho(K_1,K_2):=(\varrho_0,\varrho_1,\dots,\varrho_t)$ where $\varrho_i=|L_i\cap K_1|$ for $i=0,\dots,t$.

We claim that there exists an out-neighbor $K_3$ of $K_1$ in $D_\Phi$ such that $\varrho(K_3,K_2)$ is strictly smaller in the reverse lexicographic order than $\varrho(K_1,K_2)$. To see this, let $i$ be the largest index such that $\varrho_i\neq 0$, and let $v$ be in $K_1\cap L_i$. Since $v\in L_i$, there exists an $A\to v\in\Phi$ such that $A\subseteq \bigcup_{j=0}^{i-1} L_j$. For the set $S=(K_1-v)\cup A$ we have that $|L_i\cap S|<|L_i\cap K_1|$ and $|L_j\cap S|=0$ for $j>i$. Thus the out-neighbor $K_3\subseteq S$ satisfies the claim. By induction in the reverse lexicographic order of the possible $\varrho$ vectors, there exists a directed path in $D_\Phi$ from $K_3$ to $K_2$. As $K_1K_3\in E$, the same holds for $K_1$, thus finishing the proof of the lemma.
\end{proof}

Next we propose an algorithm similar to the approach used in \cite{boros1990polynomial,eiter2007computing} for generating all prime implicates and all abductive explanations of a Horn CNF.

\begin{theorem} \label{thm:genkeys}
Given a pure Horn CNF $\Phi$, we can generate all minimal keys of $\Phi$ with polynomial delay. 
\end{theorem}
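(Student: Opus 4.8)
The plan is to turn the strongly connected graph $D_\Phi$ from Lemma~\ref{lem:sc} into a polynomial-delay traversal. Since every minimal key of $\Phi$ is a vertex of $D_\Phi$, and $D_\Phi$ is strongly connected, it suffices to explore all vertices reachable from any single starting vertex by following the out-edges. First I would produce an initial minimal key: start from $S=V$ (which is trivially a key) and greedily drop variables one at a time, using forward chaining with respect to $\Phi$ to check at each step whether the remaining set is still a key, stopping when it is minimal. This gives one vertex $K_0$ to seed the search.

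The core of the argument is that the out-neighborhood of any vertex $K$ can be computed in polynomial time. For each variable $v\in K$ and each clause $A\to v\in\Phi$, I would form $S=(K-v)\cup A$ and extract a minimal key $K'\subseteq S$ by the same greedy dropping procedure; there are at most $m$ such clauses per key (as noted in the construction of $D_\Phi$), and each greedy extraction costs polynomially many forward-chaining computations. Thus the full adjacency list of $K$ is computable in time polynomial in the input size. Because $D_\Phi$ is strongly connected, starting a graph search (breadth-first or depth-first) from $K_0$ and repeatedly expanding out-neighborhoods will eventually visit every minimal key of $\Phi$.

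To achieve genuine polynomial delay rather than merely output-polynomial total time, I would follow the standard technique used in \cite{boros1990polynomial,eiter2007computing}: maintain a dictionary of already-discovered keys (so duplicates produced by the non-injective map $K\mapsto K'$ can be recognized and discarded), and output each key at the moment it is first discovered. The subtlety is that a naive BFS/DFS might do unboundedly much work between two genuinely new outputs when many expansions yield already-seen keys. The standard remedy is to organize the search so that whenever a vertex is popped from the queue, its expansion is guaranteed to either yield a new key promptly or can be amortized; concretely, one expands a dequeued key, tests each of its at most $m$ out-neighbors against the dictionary, and enqueues and outputs any new one. Since each already-visited key is expanded exactly once and each expansion costs only polynomial time producing at most $m$ candidates, the gap between consecutive new outputs is bounded by a polynomial.

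The main obstacle I expect is precisely this delay bookkeeping: strong connectivity of $D_\Phi$ only guarantees that all keys are reachable, which by itself gives output-polynomial total time but not a polynomial bound on the \emph{delay} between consecutive outputs. The essential point to verify carefully is that the dictionary lookups together with the bounded out-degree $m$ and the polynomial cost of each greedy extraction let one charge the work done between two outputs to a single newly expanded vertex, so that no more than polynomially many operations separate successive distinct keys. Checking membership efficiently (e.g.\ via a trie or hash structure keyed on the characteristic vectors of discovered keys) and confirming that each key is expanded at most once are the details that make the delay bound go through.
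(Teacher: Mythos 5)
Your proposal follows essentially the same route as the paper: seed the search with a minimal key obtained by greedily deleting variables from $V$ (checking key-ness by forward chaining), compute the at most $m$ out-neighbors of each discovered key in polynomial time, and traverse $D_\Phi$ with a queue plus a dictionary of already-seen keys, relying on Lemma~\ref{lem:sc} for completeness. All of that matches the paper's proof.

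The one place where your argument does not go through as written is the delay bound. You propose to \emph{output each key at the moment it is first discovered}, and then claim that since each key is expanded exactly once, the gap between consecutive outputs is polynomial. That inference is invalid: after a burst of discoveries, the algorithm may dequeue and expand a long run of keys all of whose out-neighbors are already in the dictionary, producing no output for as many expansions as there are pending keys in the queue --- a quantity bounded only by the (possibly exponential) number of minimal keys, not by the input size. ``Each key expanded once'' yields total time polynomial in input plus output (incremental/output-polynomial time), not polynomial delay. The paper's fix, which is the standard one, is to decouple discovery from output: a key is output only at the moment it is popped and expanded, immediately after its out-neighbors have been generated and the new ones enqueued. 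Then every iteration of the main loop emits exactly one key, so the delay is the cost of a single expansion --- at most $m$ candidate sets, each minimized by a polynomial number of forward-chaining calls, plus a dictionary operation. With that one change your argument is complete and coincides with the paper's.
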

\begin{proof}
Consider the directed graph $D_\Phi$. Our algorithm will generate all out-neighbors of the minimal keys that are already generated, starting from a minimal key which we generate by greedily leaving out elements from $V$. As $D_\Phi$ is strongly connected according to Lemma~\ref{lem:sc}, all minimal keys are obtained this way.

The set of minimal keys that are already generated is kept in a last-in-first-out queue. Before outputting the top element of the queue, we generate all its out-neighbors and add the new ones to the queue. Since the generation of the out-neighbors can be done in polynomial time the numbers of variables and clauses, this procedure has a polynomial delay.
\end{proof}

\subsection{Minimum Target Set Selection}

In the {\sc Minimum Target Set Selection} problem, we are given an undirected graph $G=(V,E)$ and a threshold function $t:V\to\mathbb{Z}_+$. As a starting step, we can activate a subset $S\subseteq V$ of vertices. In every subsequent round, a vertex $v$ becomes activated if at least $t(v)$ of its neighbors are already active. The goal is to find a minimum sized initial set $S$ of active nodes (called a \emph{target set}) so that the activation spreads to the entire graph.

Finding a minimum sized target set is rather difficult. Chen \cite{chen2009approximability} showed that the problem is difficult to approximate within a $O(\poly\log(n))$ factor already when all thresholds are 2 and the graph has a constant degree. Charikar et al. \cite{charikar2016approximating} proved that, assuming the Planted Dense Subgraph conjecture, {\sc Minimum Target Set Selection} is in fact difficult to approximate within a factor of $O(n^{1/2-\varepsilon})$ for every $\varepsilon>0$ even for constant thresholds.

The aim of this section is to show that the problems of finding a minimum target set in a graph ({\sc Min-TSS}) and of finding a minimum key of a pure Horn function ({\sc Min-Key}) are closely related. 

\begin{figure}
\centering
\begin{subfigure}[t]{.48\textwidth}
  \centering
  \includegraphics[width=.6\linewidth]{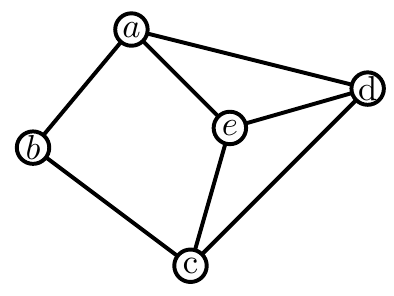}
  \caption{Instance of \textsc{Min-TSS} problem. The thresholds are $t(a)=t(b)=t(c)=t(d)=1$ and $t(e)=2$.}
  \label{fig:red1}
\end{subfigure}%
\hfill
\begin{subfigure}[t]{.48\textwidth}
  \centering
  \includegraphics[width=.6\linewidth]{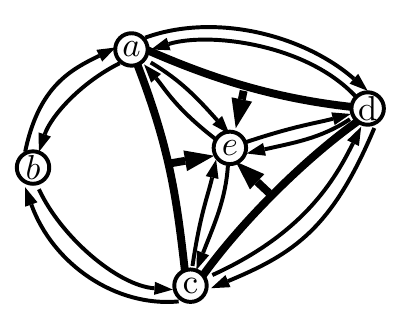}
  \caption{Construction of $\Psi_G$. Thick hyperedges represent clauses containing three variables.}
  \label{fig:red2}
\end{subfigure}
\caption{Illustration of Theorem~\ref{thm:red1}. The CNF associated to $G$ is $\Psi_G=(b\to a)\wedge(e\to a)\wedge(d\to a)\wedge(a\to b)\wedge(c\to b)\wedge(b\to c)\wedge(d\to c)\wedge(e\to c)\wedge (a\to d)\wedge(c\to d)\wedge(e\to d)\wedge(\{a,c\}\to e)\wedge(\{a,d\}\to e)\wedge(\{c,d\}\to e)$.}
\label{fig:reda}
\end{figure}

\begin{theorem}\label{thm:red1}
The \textsc{Min-TSS} problem with constant thresholds is polynomial-time reducible to the \textsc{Min-Key} problem.
\end{theorem}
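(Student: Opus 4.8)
The plan is to construct, from an instance of \textsc{Min-TSS} consisting of a graph $G=(V,E)$ and constant thresholds $t:V\to\mathbb{Z}_+$, a pure Horn CNF $\Psi_G$ whose minimal keys are in size-preserving correspondence with the target sets of $G$. The example in Figure~\ref{fig:reda} suggests the right encoding: for each vertex $v\in V$ with threshold $t(v)$ and neighborhood $N(v)$, I would introduce, for every subset $A\subseteq N(v)$ with $|A|=t(v)$, the pure Horn clause $A\to v$. Intuitively, once any $t(v)$ neighbors of $v$ are active, $v$ becomes active; this is exactly what forward chaining with respect to these clauses accomplishes. Since $t(v)$ is bounded by a constant, each vertex contributes at most $\binom{\deg(v)}{t(v)}=O(n^{t(v)})$ clauses, so $\Psi_G$ has polynomial size and is constructible in polynomial time — this is where the constant-threshold hypothesis is used.

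The heart of the argument is to show that the activation process in $G$ coincides with the forward chaining closure $F_{\Psi_G}$ on $V$. First I would verify by induction on the activation rounds that for any $S\subseteq V$, a vertex $u$ is eventually activated starting from seed set $S$ if and only if $u\in F_{\Psi_G}(S)$. In the forward direction, if $u$ gets activated in some round because $t(u)$ of its neighbors are already active, those neighbors lie in $F_{\Psi_G}(S)$ by the induction hypothesis, and the corresponding clause $A\to u$ (for $A$ a size-$t(u)$ subset of those active neighbors) then forces $u\in F_{\Psi_G}(S)$. Conversely, any clause of $\Psi_G$ used in the forward chaining derivation of $u$ has body a set of $t(u)$ neighbors of $u$, matching a valid activation step. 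Hence $S$ is a target set of $G$ (i.e.\ activation spreads to all of $V$) if and only if $F_{\Psi_G}(S)=V$, which is precisely the condition for $S$ to be a key of $\Psi_G$.

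With this equivalence established, the minimal target sets of $G$ are exactly the minimal keys of $\Psi_G$, and in particular the minimum target set has the same cardinality as the minimum key. A solution to \textsc{Min-Key} on $\Psi_G$ therefore yields a solution to \textsc{Min-TSS} on $(G,t)$ of the same size, and since $\Psi_G$ is computed in polynomial time, this is a valid polynomial-time reduction.

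I expect the main obstacle to be a subtle mismatch between the monotone, threshold-based activation dynamics and Horn forward chaining: the activation rule fires when \emph{at least} $t(v)$ neighbors are active, whereas each clause $A\to v$ encodes an \emph{exactly}-$t(v)$-sized trigger. The resolution is that using all size-$t(v)$ subsets as bodies makes the two equivalent, since any set of more than $t(v)$ active neighbors contains a size-$t(v)$ subset, and this is exactly why one needs the full family of $\binom{\deg(v)}{t(v)}$ clauses rather than a single clause per vertex. Care is also needed to confirm that minimality is preserved under the correspondence — that a set is an inclusionwise-minimal target set precisely when it is an inclusionwise-minimal key — but this follows immediately once the closure operators are shown to agree, since minimality is defined purely in terms of the spanning/key condition $F_{\Psi_G}(S)=V$.
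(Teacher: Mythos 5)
Your construction of $\Psi_G$ (one clause $A\to v$ for every size-$t(v)$ subset $A$ of $N(v)$) and the subsequent identification of the activation process with forward chaining are exactly the paper's reduction; your induction on activation rounds and the remark about \enquote{at least} versus \enquote{exactly} $t(v)$ simply spell out details the paper leaves implicit. The proposal is correct and takes the same approach.
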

\begin{proof}
Let $G=(V,E)$, $t:V\to\mathbb{Z}_+$ be an instance of the \textsc{Min-TSS} problem. For a vertex $v\in V$, we denote the set of its neighbors by $N(v)\subseteq V$. We construct a Horn CNF as follows (see Figure~\ref{fig:reda}):
\begin{equation*}
\Psi_G:=\bigwedge_{v\in V}\bigwedge_{\substack{A\subseteq N(v)\\ |A|=t(v)}} A\to v.
\end{equation*}
Note that $\Psi_G$ can be determined in polynomial time as the thresholds are assumed to be constants. By the definition of $\Phi_G$, the activation process in $G$ is equivalent to the forward chaining process in $\Psi_G$. This means that $K\subseteq V$ is a target set of $G$ if and only if it is a key of $\Psi_G$, concluding the proof of the theorem.
\end{proof}

Theorem~\ref{thm:red1} together with the hardness result of \cite{charikar2016approximating} implies that \textsc{Min-Key} is difficult to approximate within a factor of $O(n^{1/2-\varepsilon})$ for every $\varepsilon>0$, assuming that the Planted Dense Subgraph conjecture holds.

Based on a construction previously used in \cite{charikar2016approximating} for establishing a connection between the directed and undirected variants of the target set selection problem, we show that \textsc{Min-TSS} includes \textsc{Min-Key} as a special case.

\begin{theorem}\label{thm:red2}
The \textsc{Min-Key} problem is polynomial-time reducible to the \textsc{Min-TSS} problem.
\end{theorem}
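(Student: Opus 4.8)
The plan is to read \textsc{Min-Key} as a monotone activation (target-set) problem on a \emph{directed} AND--OR hypergraph, to realize that hypergraph as an ordinary directed graph, and then to invoke the directed-to-undirected gadget of Charikar et al.~\cite{charikar2016approximating} to turn it into an equivalent undirected \textsc{Min-TSS} instance. First I would build a directed graph $D$ from $\Phi$: for every variable $v\in V$ keep a vertex $v$, and for every clause $C=(A\to v)$ introduce a \emph{clause vertex} $w_C$, with arcs $a\to w_C$ for each $a\in A$ and one arc $w_C\to v$. Setting the directed in-thresholds to $t(w_C)=|A|$ and $t(v)=1$ for all $v\in V$ makes $w_C$ an AND-gate that fires exactly when its whole body is active and makes $v$ an OR-gate that fires as soon as one clause with head $v$ has fired. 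By construction the directed activation process in $D$, restricted to $V$, is precisely the forward chaining of $\Phi$, so for $S\subseteq V$ the seed $S$ activates all of $D$ if and only if $S$ is a key of $\Phi$.

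To line up the optimum values I would argue that some minimum directed target set can be chosen inside $V$. The key observation is that any successful seed activates every variable, hence every clause body, hence every clause vertex automatically; so a clause vertex never needs to be seeded for its own sake. Moreover $w_C$ has its sole out-arc into its head $v$, so replacing a seeded $w_C$ by $v$ activates at least as much at no extra cost, and iterating pushes the seed into $V$. Consequently the minimum key of $\Phi$ equals the minimum directed target set of $D$. Second, I would transform $D$ into an undirected \textsc{Min-TSS} instance $G$ using the construction of \cite{charikar2016approximating}, which simulates each arc by an undirected gadget whose thresholds are padded so that activation propagates only in the intended direction: a forward signal arrives as a full bundle meeting a deliberately large threshold, whereas the single extra contribution an active head could send back along the same gadget stays strictly below the gap needed to fire its tail prematurely. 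With the bundle size/padding taken larger than every degree created, no combination of backward contributions ever reaches a threshold, the gadget vertices auto-activate from either endpoint (so they never need to be seeded), and the undirected closure of any seed coincides with the directed closure in $D$. Since $G$ and its thresholds are of polynomial size and computable in polynomial time, a minimum target set of $G$ has the same size as a minimum key of $\Phi$, giving the reduction.

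The main obstacle is this second step: enforcing \emph{one-directional} activation inside an intrinsically symmetric undirected graph. Every undirected edge feeds the thresholds of both its endpoints, so the real work is to rule out \emph{backward leakage} — for instance an active head together with a body that is one element short spuriously firing a clause vertex. This is exactly what the threshold-padding/bundling idea of \cite{charikar2016approximating} is designed to handle, and the bulk of the proof is the bookkeeping showing that with sufficiently large padding the backward contributions are always dominated, so that the undirected process faithfully mirrors the directed one and the two optimum values coincide.
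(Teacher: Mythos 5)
Your proposal follows essentially the same route as the paper: introduce a clause vertex with threshold $|A|$ acting as an AND-gate, give each variable threshold $1$, and simulate the directed arcs by the one-directional gadget of Charikar et al.\ \cite{charikar2016approximating} to block backward leakage; the paper makes that gadget explicit (a small diamond $x^C_a,y^C_a,z^C_a,w^C_a$ with $t(w^C_a)=2$, so that activation passes from tail to head but a lone active head cannot fire $w^C_a$) and likewise maps any seeded gadget vertex back to a variable to equate the optima. Your argument is correct in substance, differing only in that it defers the gadget's bookkeeping to the cited reference rather than exhibiting it.
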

\begin{proof}
Let $\Phi$ be a pure Horn CNF on variables $V$. We construct a graph $G=(V',E)$ together with a threshold function $t:V\to\mathbb{Z}_+$ such that every key of $\Phi$ is a target set of $G$, while every target set of $G$ can be transformed to a key of $\Phi$ without increasing the size of the set. 

We add the set of variables $V$ to the vertices of $G$, and define $t(v)=1$ for $v\in V$. For every clause $C=A\to v$ of $\Phi$, we construct a gadget as follows. We add a vertex $p^C$ that corresponds to the clause and set $t(p^C)=|A|$. For every variable $a\in A$, we add four new vertices $x^C_a,y^C_a,z^C_a$ and $w^C_a$ with thresholds $t(x^C_a)=t(y^C_a)=t(z^C_a)=1$ and $t(w^C_a)=2$, together with the edges $ax^C_a,x^C_ay^C_a,x^C_az^C_a,y^C_aw^C_a,z^C_aw^C_a$ and $w^C_ap^C$. Finally, we add four new vertices $x^C_v,y^C_v,z^C_v$ and $w^C_v$ with thresholds $t(x^C_v)=t(y^C_v)=t(z^C_v)=1$ and $t(w^C_v)=2$, together with the edges $p^Cx^C_v,x^C_vy^C_v,x^C_vz^C_v,y^C_vw^C_v,z^C_vw^C_v$ and $w^C_vv$ (see Figure~\ref{fig:redb}). 

If $K\subseteq V$ is a key of $\Phi$, then the same set of vertices in $G$ form a target set. Indeed, when the forward chaining procedure uses a clause $C=A\rightarrow v$ to reach a variable $v$, then this step corresponds to the activation of $v$ through the gadget associated to $C$ in $G$. 

Now assume that $S$ is a target set of $G$. We cannot directly say that $S$ is a key of $\Phi$ as $S$ might contain vertices from $V'\setminus V$. However, it is not difficult to see that 
\begin{align*}
K
{}&{}:=
(V\cap S)\\
{}&{}~~~~\cup
\{v\in V\mid \text{there exists}\ C\in\Phi\ \text{with}\ v\in C,S\cap\{x^C_v,y^C_v,z^C_v,w^C_v\}\neq\emptyset\}\\
{}&{}~~~~\cup
\{v\in V\mid\text{there exists}\ C=A\to v\in\Phi\ \text{with}\ p_C\in S\}
\end{align*}
is a key of $\Phi$ with $|K|\leq |S|$, concluding the proof of the theorem. 
\end{proof}

\begin{figure}
\centering
\begin{subfigure}[t]{.48\textwidth}
  \centering
  \includegraphics[width=.6\linewidth]{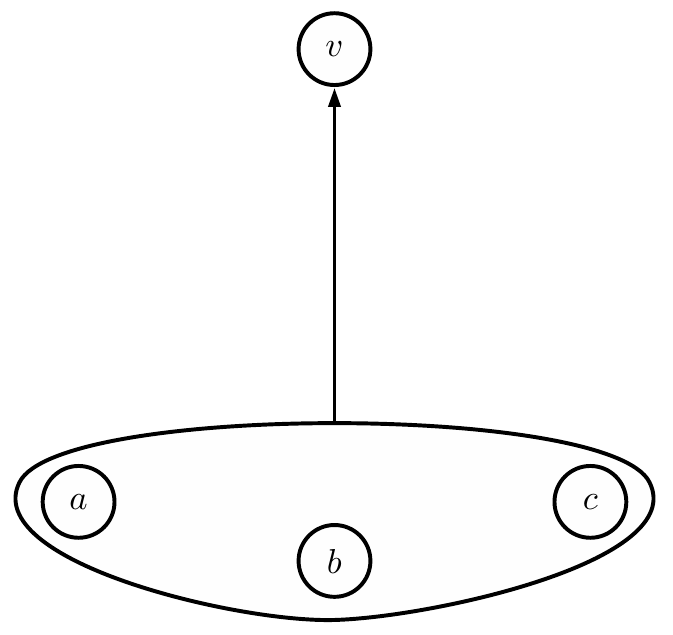}
  \caption{A pure Horn clause $C=A\to v$, where $A=\{a,b,c\}$.}
  \label{fig:red3}
\end{subfigure}%
\hfill
\begin{subfigure}[t]{.48\textwidth}
  \centering
  \includegraphics[width=.6\linewidth]{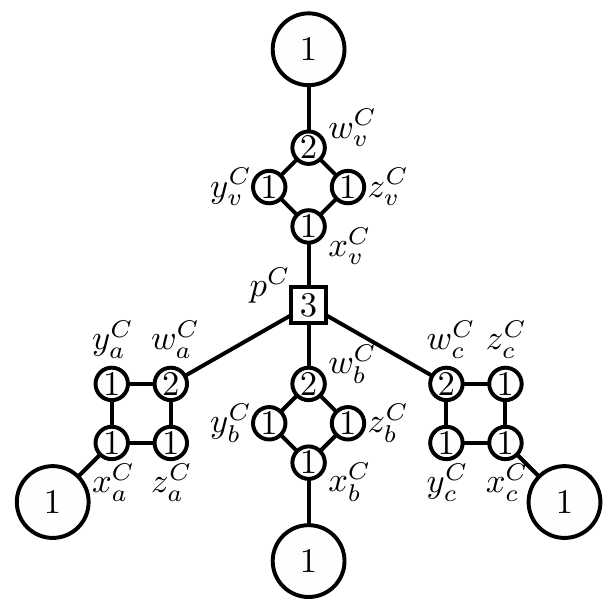}
  \caption{The gadget and threshold values corresponding to $C=A\to v$.}
  \label{fig:red4}
\end{subfigure}
\caption{Illustration of Theorem~\ref{thm:red2}. Note that the size of the graph $G$ is polynomial in the length of the input.}
\label{fig:redb}
\end{figure}

We have seen that finding a minimum sized target set is difficult already for constant thresholds. However, by combining Theorems~\ref{thm:genkeys} and \ref{thm:red1}, we get the following result.

\begin{corollary}
Given a graph $G=(V,E)$ and constant thresholds $t:V\rightarrow\mathbb{Z}_+$, we can generate all minimal target sets of $G$ with polynomial delay.
\end{corollary}

\section{Conclusions}

In this paper we defined unique key hypergraphs as Sperner hypergraphs that form the set of minimal keys of a unique pure Horn function. We gave a characterization of such hypergraphs, and showed that cuts of a matroid form a natural example. We proved that the recognition of unique key hypergraphs is co-$\mathsf{NP}$-complete already when every hyperedge has size two. We identified several classes of graphs for which the recognition problem can be decided in polynomial time. We gave an algorithm for generating all minimal keys of a pure Horn function with polynomial delay. By showing that the problems of finding a minimum key of a pure Horn function and of finding a minimum target set of a graph are closely related, we extended our algorithm to generate all minimal target sets of a graph with polynomial delay when the thresholds are bounded by a constant. It remains an open question whether all minimal target sets can be generated with polynomial delay when the thresholds are unbounded.

\medskip
\paragraph{Acknowledgements} Krist\'of B\'erczi was supported by the J\'anos Bolyai Research Fellowship of the Hungarian Academy of Sciences and by the ÚNKP-19-4 New National Excellence Program of the Ministry for Innovation and Technology. Ond\v{r}ej \v{C}epek and Petr Ku\v{c}era gratefully acknowledge a support by the Czech Science Foundation (Grant 19-19463S). Projects no. NKFI-128673 and no. ED\_18-1-2019-0030 (Application-specific highly reliable IT solutions) have been implemented with the support provided from the National Research, Development and Innovation Fund of Hungary, financed under the FK\_18 and the Thematic Excellence Programme funding schemes, respectively. This  work  was  supported  by  the  Research  Institute  for  Mathematical  Sciences,  an  International Joint Usage/Research Center located in Kyoto University. 

\bibliographystyle{abbrv}
\bibliography{min_keys}

\end{document}